\pgfplotsset{width=8cm,compat=newest}
\def\colorful{0}
\newcommand{\violet}[1]{{\color{violet}{#1}}}
\newcommand{\blue}[1]{{{\color{blue}#1}}}
\newcommand{\red}[1]{{\color{red} {#1}}}
\newcommand{\violet}[1]{{{#1}}}
\newcommand{\blue}[1]{{{#1}}}
\newcommand{\red}[1]{{{#1}}}
\newcommand{\mcA}{\mathcal{A}}
\newcommand{\error}{\mathrm{error}}
\newcommand{\updownarrows}{\uparrow\mathrel{\mspace{-1mu}}\downarrow}
\newtheorem*{rep@theorem}{\rep@title}
\newcommand{\newreptheorem}[2]{
\newenvironment{rep#1}[1]{
 \def\rep@title{#2 \ref{##1}}
 \begin{rep@theorem}\itshape}
 {\end{rep@theorem}}}
\begin{document}

\title{A query-optimal algorithm for finding counterfactuals
\vspace{15pt}}

\author{Guy Blanc \vspace{8pt} \\ \hspace{-5pt}{\sl Stanford}
\and Caleb Koch \vspace{8pt}\\ \hspace{-8pt} {\sl Stanford}
\and \hspace{0pt} Jane Lange \vspace{8pt} \\ \hspace{-4pt}  {\sl MIT}
\and Li-Yang Tan \vspace{8pt} \\ \hspace{-8pt} {\sl Stanford}}

\date{\vspace{15pt}\small{\today}}

\maketitle

\begin{abstract}
We design an algorithm for finding counterfactuals with strong theoretical guarantees on its performance.  For any monotone  model $f : X^d \to \zo$ and instance $x^\star$, our algorithm makes
\[ \red{S(f)^{O(\Delta_f(x^\star))}\cdot \log d}\]
queries to $f$ and returns \violet{an {\sl optimal}} counterfactual for $x^\star$: a nearest instance $x'$ to $x^\star$ for which $f(x')\ne f(x^\star)$.  Here $S(f)$ is the sensitivity of $f$, a discrete analogue of the Lipschitz constant, and $\Delta_f(x^\star)$ is the distance from $x^\star$ to its nearest counterfactuals. The previous best known query complexity  was $d^{\,O(\Delta_f(x^\star))}$, achievable by brute-force local search. 
We further prove a lower bound of $S(f)^{\Omega(\Delta_f(x^\star))} + \Omega(\log d)$ on the query complexity of any algorithm, thereby showing that the  guarantees of our algorithm are essentially optimal. 
\end{abstract} 

\thispagestyle{empty}
\newpage

\section{Introduction}

Counterfactual reasoning is at the very heart of  causal inference~\cite{Per09,Per09book,MW15}.  Counterfactuals are answers to ``what would have happened if" questions: if a person has been denied a loan, would their application have been approved if their annual income were \$$20$k higher?   In explainable ML, there is a growing interest in the use of {\sl counterfactual explanations}~\cite{WMR17} to understand the predictions of black box models: given a model $f$ and an instance $x^\star$, we would like to efficiently determine how a few features of $x^\star$ can be changed to obtain a similar instance~$x'$ for which $f(x')\ne f(x^\star)$.

It is natural to seek {\sl sparse} counterfactuals, meaning that $x'$ differs from $x^\star$ in as few features as possible.  Ideally, we would like counterfactuals that are {\sl optimally} sparse:

\begin{definition}[Sparsity and optimality] 
For a model $f : X^d \to\zo$ and two instances $x^\star,x'$ such that $f(x^\star)\ne f(x')$, the {\sl sparsity} of $x'$ as a counterfactual for $x^\star$ is the number of features on which they differ: \violet{the quantity $|\Delta(x^\star,x')|$, where} 
\[ \Delta(x^\star,x') \coloneqq \{ i\in [d]\colon x^\star_i \ne x'_i\}. \] 
The {\sl counterfactual complexity} of $x^\star$ with respect to $f$ is the quantity: 
\[ \Delta_f(x^\star) \coloneqq \min_{x'\in X^d} \{ \violet{|\Delta(x^\star,x')|} \colon f(x^\star)\ne f(x')\}, \] and we say that $x'$ is an {\sl optimal} counterfactual for $x^\star$ if $\violet{|\Delta(x^\star,x')|}= \Delta_f(x^\star)$. 
\end{definition} 

Another desideratum that has received significant attention, motivated by the need for actionable recourse~\cite{USL19}, is that of {\sl diversity} in counterfactual explanations~\cite{WMR17,Rus19,MST20,KBBV20}: a wide range of counterfactuals instead of just a single one.  

\subsection{Our contributions}

\violet{Our first result is an efficient algorithm for finding {\sl all optimal} counterfactuals for any monotone $f$ in the setting of binary features:

\begin{theorem}
\label{thm:main-intro} 
Given queries to a monotone model $f : \zo^d \to \zo$ with sensitivity $S(f)$ and an instance $x^\star$, our algorithm makes 
\[ \red{S(f)^{O(\Delta_f(x^\star))}\cdot \log d}\]
queries to $f$ and w.h.p.\footnote{Throughout we write ``with high probability'' or w.h.p.~to indicate probability at least $1-1/\text{poly}(d).$}~returns all optimal counterfactuals for $x^\star$. 
\end{theorem}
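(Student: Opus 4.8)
We may assume $f(x^\star)=0$ (otherwise run the algorithm on the monotone function $x\mapsto 1-f(\bar x)$ at the instance $\bar{x^\star}$, which swaps the roles of $0$ and $1$ while preserving sparsity and $S(f)$), and that $f$ is nonconstant (else there is no counterfactual). Let $Z=\{i\in[d]:x^\star_i=0\}$. Since $f$ is monotone and $f(x^\star)=0$, flipping any coordinate of $x^\star$ from $1$ to $0$ leaves the value at $0$, so every counterfactual for $x^\star$ flips only coordinates of $Z$ from $0$ to $1$. Define the monotone $h\colon\zo^{Z}\to\zo$ by $h(y)=f(z)$, where $z_i=y_i$ for $i\in Z$ and $z_i=1$ otherwise; then $h(\mathbf 0)=f(x^\star)=0$, $S\coloneqq S(h)\le S(f)$, and $T\mapsto(x^\star\text{ with }T\text{ set to }1)$ is a bijection from the minterms (minimal $1$-certificates) of $h$ to the minimal counterfactuals of $x^\star$, under which the \emph{optimal} counterfactuals correspond exactly to the minterms of $h$ of minimum size, that minimum size being $k\coloneqq\Delta_f(x^\star)$. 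A first observation: for any minterm $T$ of $h$ we have $h(\mathbf 1_T)=1$ while $h(\mathbf 1_{T\setminus\{i\}})=0$ for every $i\in T$, so $\Sens(h,\mathbf 1_T)\ge|T|$; hence every minterm of $h$ has size at most $S\le S(f)$, and in particular $k\le S(f)$. So the task is: with $S(f)^{O(k)}\cdot\log d$ queries to $h$, output all size-$k$ minterms of $h$.

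The algorithm is a recursion $\textsc{Solve}(h,k)$ returning all size-$k$ minterms of $h$; the base case $k=0$ is immediate. For $k\ge 1$, the first step is to compute a \emph{hitting set} $J\subseteq Z$ for the size-$k$ minterms of $h$ (a set meeting every size-$k$ minterm) with $|J|$ small (Lemmas A, B below). The point is that if $T$ is a size-$k$ minterm of $h$ and $j\in T$, then $T\setminus\{j\}$ is a size-$(k-1)$ minterm of the restriction $h|_{y_j=1}$, whose sensitivity is still at most $S$; so every size-$k$ minterm of $h$ has the form $\{j\}\cup T'$ with $j\in J$ and $T'\in\textsc{Solve}(h|_{y_j=1},k-1)$. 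The algorithm forms $\bigcup_{j\in J}\{\,\{j\}\cup T' : T'\in\textsc{Solve}(h|_{y_j=1},k-1)\,\}$ and then discards every $T$ in this union failing the membership test ``$h(\mathbf 1_T)=1$ and $h(\mathbf 1_{T\setminus\{i\}})=0$ for all $i\in T$'' ($k+1$ queries per candidate), leaving exactly the size-$k$ minterms of $h$. The value $k=\Delta_f(x^\star)$ is found by calling $\textsc{Solve}(h,0),\textsc{Solve}(h,1),\dots$ until the first nonempty answer. If $|J|=\mathrm{poly}(S)$ at every node, the recursion has depth $k$ and branching factor $\mathrm{poly}(S)$, hence $S^{O(k)}$ leaves; with a per-node cost of $\mathrm{poly}(S)\cdot\log d$ queries (dominated by constructing $J$) and an $O(k)$-fold overhead for finding $k$, the total is $S(f)^{O(\Delta_f(x^\star))}\cdot\log d$ queries.

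\emph{Lemma A} (combinatorial, and the crux): for monotone $h$ with $S(h)=S$ whose minimum minterm size is $k$, the size-$k$ minterms admit a hitting set of size $\mathrm{poly}(S)$. The plan is to bound $\nu$, the maximum number of pairwise-disjoint size-$k$ minterms: given such a family $T_1,\dots,T_N$, choose representatives $c_j\in T_j$ carefully so that the set $C=\{c_1,\dots,c_N\}$ contains no minterm of $h$; then at $\mathbf 1_{C}$, for each $j$ the nonempty set $T_j\setminus\{c_j\}$ contains a coordinate whose addition completes $T_j$ and flips $h$ to $1$, so $\mathbf 1_C$ has at least $N$ sensitive coordinates and $N\le S$; the union of a \emph{maximal} pairwise-disjoint family then has size $\le k\nu=\mathrm{poly}(S)$ and, by maximality, is a hitting set. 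I expect the delicate point -- and the main obstacle of the whole proof -- to be exactly the guarantee that representatives can be chosen with $C$ minterm-free, i.e.\ the quantitative claim that one cannot ``pack'' super-polynomially (in $S$) many spread-out minimum minterms; this is the one place where monotonicity, the optimality of $k$, and the sensitivity bound must be combined, and obtaining a bound with a \emph{universal} exponent (so it survives being raised to the $k$-th power over the depth-$k$ recursion) is the heart of the matter.

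\emph{Lemma B} (algorithmic): a hitting set as in Lemma A can be constructed with $\mathrm{poly}(S)\cdot\log d$ queries, by repeatedly extracting a size-$k$ minterm disjoint from those already found until none remains; each extraction reduces to ``find one size-$k$ minterm of a monotone function over a known coordinate set, or certify none exists'', solvable in $\mathrm{poly}(S)\cdot\log d$ queries by a binary-search/group-testing procedure exploiting that $h(\mathbf 1_B)=0$ certifies $B$ contains no minterm and that every minterm has size $\le S$. This subroutine is the sole source of the $\log d$ factor; it is randomized, so amplifying its failure probability below $1/\mathrm{poly}(d)$ and union-bounding over the $S(f)^{O(\Delta_f(x^\star))}\cdot\log d$ invocations yields the high-probability guarantee. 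Edge cases ($f$ constant, $f(x^\star)=1$) are handled by the reductions above.
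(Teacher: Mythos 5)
Your overall architecture---reduce to the zero-coordinates of $x^\star$ to obtain a monotone $h$ with $h(\mathbf 0)=0$, observe that optimal counterfactuals correspond to minimum-size minterms of $h$, and recurse by branching over a $\mathrm{poly}(S)$-size hitting set $J$ while decrementing $k$---is a genuinely different route from the paper's, which constructs an implicit decision tree of depth $S(f)^{O(1)}$ and has \textsc{FindMinimal} walk it while allowing at most $k$ deviations from the $x^\star$-path. But the crux you flag, Lemma~A, is a genuine gap, and the sketched packing argument does not close it. You choose $c_j\in T_j$ so that $C=\{c_1,\ldots,c_\nu\}$ is minterm-free and then claim that at $\mathbf 1_C$, for each $j$ some coordinate of $T_j\setminus\{c_j\}$ flips $h$ to $1$. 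That would require $C\cup\{i\}$ to contain a minterm, yet $C$ meets $T_j$ only in $c_j$, so adding a single coordinate covers only two of the $k$ elements of $T_j$; for $k\ge 3$ nothing forces $h$ to flip. (Working instead at $\mathbf 1_{(\bigcup_j T_j)\setminus C}$ would let one flip complete $T_j$, but then you must arrange $h=0$ there, which is not automatic.) There is also an interface mismatch: Lemma~A assumes the minimum minterm size equals $k$, but the recursion calls $\textsc{Solve}(h|_{y_j=1},k-1)$ and the restricted function can have minimum minterm size strictly below $k-1$, so the hypothesis is not inherited.

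Both issues vanish with a cleaner choice of hitting set: a single \emph{maxterm} of $h$. Since $h$ is monotone and $h(\mathbf 0)=0$, a minimal $0$-certificate $M$ for $h$ at $\mathbf 0$ satisfies $h(y)=0$ whenever $y_M=\mathbf 0$, so $M$ meets \emph{every} minterm of $h$, of any size; and $|M|\le C_0(h)\le C(h)=S(h)\le S(f)$ by \Cref{fact:sensitivity_equals_cc} (Nisan's $S=C$ for monotone functions), a bound preserved under restriction since sensitivity can only drop. This gives branching factor $\le S(f)$ at every level of the recursion, with no minimum-size hypothesis to maintain. Lemma~B, however, is also not a routine binary search: producing a $\mathrm{poly}(S)$-size certificate with $\mathrm{poly}(S)\cdot\log d$ queries is precisely what the paper imports as a black box from~\cite{BKLT22} (\Cref{thm:efficient_certification}), and you would need that same result. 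So in the end both your route and the paper's rest on the same two facts about monotone functions---$S=C$, and query-efficient certificate finding---repackaged either as a hitting-set recursion over the minterm lattice (yours) or as a shallow implicit decision tree together with \Cref{lem:restricting_to_constant} bounding the number of certificate blocks along any path (the paper's).
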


We will define and discuss sensitivity in the body of the paper (see~\Cref{sec:prelims}), mentioning for now that it is a well-studied discrete analogue of the Lipschitz constant, \violet{and this quantity is often much smaller than $d$.}   

Next, we consider the setting of general feature spaces.  In this setting it is infeasible to return {\sl all} optimal counterfactuals due to the sheer number of them.  We are nevertheless able to efficiently return the collection of all {\sl subsets of features} induced by these optimal counterfactuals:    

\begin{theorem} 
\label{thm:main-intro-non-boolean} 
Given queries to a monotone model $f : X^d \to \zo$ with sensitivity $S(f)$ and an instance $x^\star$, our algorithm makes 
\[ \red{S(f)^{O(\Delta_f(x^\star))}\cdot \log d}\]
queries to $f$ and w.h.p.~returns the collection \violet{$\mathcal{C}_f(x^\star) = \{ \Delta(x^\star,x') \colon \text{$x'$ is an optimal counterfactual for $x^\star$}\}$. }
\end{theorem}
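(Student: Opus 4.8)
The plan is to derive Theorem~\ref{thm:main-intro-non-boolean} by reducing it to a purely Boolean, monotone ``core problem'' and then solving that; the reduction is essentially the only new ingredient relative to Theorem~\ref{thm:main-intro}. Assume $f(x^\star)=1$ (the case $f(x^\star)=0$ is symmetric, with ``least element'' replaced by ``greatest element'' below) and, as is standard for monotone models, that each feature domain is totally ordered with a least element. First I would record that monotonicity makes it useless to \emph{raise} a coordinate: if $x'$ is any optimal counterfactual then $x'_i < x^\star_i$ for every $i\in\Delta(x^\star,x')$, since otherwise resetting coordinate $i$ to $x^\star_i$ only lowers the input, preserves $f=0$, and shrinks the differing set, contradicting optimality. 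Now define the monotone (decreasing) Boolean function $g:\zo^d\to\zo$ by $g(z)=f(x^z)$, where $x^z_i=x^\star_i$ when $z_i=0$ and $x^z_i$ is the least element of the $i$th domain when $z_i=1$. Since each coordinate of $x^z$ depends on $z$ only through $z_i$, we get $S(g)\le S(f)$, and one query to $g$ costs one query to $f$. One then checks that $g(0^d)=1$, that $\Delta_f(x^\star)$ equals the least Hamming weight $\Delta$ of a $0$-input of $g$ (and that $g(1^d)=f(\text{all-least})=1$ iff $f$ is constant, a degenerate case detected by a single query), and, combining the observation above with monotonicity, that
\[ \mathcal{C}_f(x^\star)\;=\;\{\,\mathrm{supp}(z)\;:\;g(z)=0,\ |z|=\Delta\,\}, \]
the family of supports of the minimum-weight (equivalently, weight-$\Delta$ and inclusion-minimal) $0$-inputs of $g$. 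So it suffices to output this family for any monotone $g$ with $g(0^d)=1\ne g(1^d)$, using $S(g)^{O(\Delta)}\cdot\log d$ queries.

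The structural facts that make this achievable are $\Delta\le S(g)$ and $|\mathcal{C}_f(x^\star)|\le S(g)^{O(\Delta)}$; the latter also bounds by $S(g)^{O(\Delta)}$ the number of \emph{relevant coordinates}, i.e.\ those appearing in some member of $\mathcal{C}_f(x^\star)$. For the first: at a weight-$\Delta$ $0$-input $z$ of $g$, each coordinate of $\mathrm{supp}(z)$ is sensitive (flipping it off yields weight $\Delta-1$, hence value $1$), so $\Delta\le s(g,z)\le S(g)$. For the second: if $B_1,\dots,B_k$ are pairwise-disjoint inclusion-minimal $0$-inputs, pick $c_j\in B_j$ and let $w$ be the all-ones string with coordinates $c_1,\dots,c_k$ zeroed; then each $c_j$ is sensitive at $w$ (turning it on recompletes $B_j$, forcing value $0$), so $k\le S(g)$ (to handle the sub-case where $w$ is itself a $0$-input one uses that sensitivity equals block sensitivity for monotone functions). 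Plugging this bound into the Erd\H{o}s--Rado sunflower lemma closes it: a family of more than $S(g)^{O(\Delta)}$ subsets of size $\le\Delta\le S(g)$ contains a sunflower, and after fixing its core to $1$ the petals become too many pairwise-disjoint inclusion-minimal $0$-inputs of a restriction of $g$ (whose sensitivity is still $\le S(g)$) --- contradiction. Hence $|\mathcal{C}_f(x^\star)|\le S(g)^{O(\Delta)}$, and likewise for the relevant coordinates.

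With this in hand, the algorithm has two interacting components. First, a local, sensitivity-guided search starting from $0^d$ (where $g=1$) that repeatedly branches on coordinates selected by a score built from estimated one-sided sensitivities, thereby growing a (partial) decision tree of size $S(g)^{O(\Delta)}$ whose depth-$\Delta$ value-flipping root-to-leaf paths are exactly the members of $\mathcal{C}_f(x^\star)$ (this is the role of the \BuildDT and score subroutines, with the sensitivities estimated from few random queries via \UnbiasedEstimator plus concentration). Second, a local-reconstruction wrapper (\Reconstructor) so that each of the $S(g)^{O(\Delta)}$ ``facts about $g$'' the search needs can be extracted from $f$-queries at an $O(\log d)$ overhead per branching decision; this overhead is the source of the $\log d$ factor and matches the $\Omega(\log d)$ lower bound. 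Reading $\mathcal{C}_f(x^\star)$ off the resulting structure completes the argument; by the Boolean specialization ($X=\zo$) one recovers Theorem~\ref{thm:main-intro} as well.

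The part I expect to be hardest is this $\log d$-versus-$d$ tension. The ``honest'' primitives --- listing the sensitive coordinates of a point, or picking the best branching coordinate --- cost $\Theta(d)$ queries if performed directly, so one must instead estimate scores from $O(\log d)$-scale random samples and argue, via a reconstruction/robustness analysis, that the greedy branching computed on sampled data agrees w.h.p.\ with the ``true'' greedy tree; controlling the failure probabilities simultaneously over all $S(g)^{O(\Delta)}$ nodes, and re-invoking the structural bound to certify that the tree stays of size $S(g)^{O(\Delta)}$ (i.e.\ with the exponent linear, not quadratic, in $\Delta$), is the delicate core of the proof. A secondary, purely combinatorial obstacle is that the final read-off must avoid the $S(g)^{O(\Delta^2)}$ cost of naively scanning all size-$\Delta$ subsets of the relevant coordinates; this is what forces a recursive ``condition on one relevant coordinate at a time'' enumeration, charged against the bound $|\mathcal{C}_f(x^\star)|\le S(g)^{O(\Delta)}$.
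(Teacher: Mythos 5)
Your reduction to the Boolean setting matches the paper's (Section~6): you define a monotone Boolean $g$ by snapping each coordinate toward its extreme value in the direction determined by $f(x^\star)$, observe $S(g)\le S(f)$, and check that supports of optimal counterfactuals for $f$ at $x^\star$ correspond exactly to supports of minimum-weight $0$-inputs of $g$. The paper packages this with the $\updownarrows_f$ operator and \Cref{lem:fx_properties}, and then simply invokes \Cref{thm:main} on $f_{x^\star}$ as a black box.

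Where your proposal diverges --- and has a genuine gap --- is in the Boolean algorithm itself, which you sketch from scratch rather than invoking \Cref{thm:main}. You propose a greedy tree-growing procedure from $0^d$ that branches on coordinates chosen by one-sided sensitivity scores estimated via $\UnbiasedEstimator$-style sampling, with a $\Reconstructor$ wrapper for the $\log d$ overhead. These ingredients are exactly those of~\cite{BLT21neurips}, and the paper explicitly identifies them as inadequate here: that construction yields an implicit tree that only $\eps$-approximates $f$ under the uniform distribution, while finding optimal counterfactuals demands a tree that computes $f$ \emph{exactly} --- you must certify $f(x'')=f(x^\star)$ for \emph{every} $x''$ closer than your claimed answer, not just with high probability over a random one. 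Separately, estimating per-coordinate scores from $O(\log d)$ random samples does not obviously identify the $O(S(g))$ sensitive coordinates among $d$ candidates without $\Omega(d)$ queries; you flag this as the ``hardest part'' but leave the mechanism unresolved. The paper circumvents both issues by invoking the certificate-finding algorithm of~\cite{BKLT22} (restated as \Cref{thm:efficient_certification}): for monotone $f$ it returns an \emph{exact} certificate of size $S(f)^{O(1)}$ using $S(f)^{O(1)}\cdot\log d$ queries, and iterated certificate restrictions force constancy within $O(S(f))$ rounds (\Cref{lem:restricting_to_constant}), giving an exact IDT of depth $S(f)^{O(1)}$. Your structural lemmas ($\Delta\le S(g)$ and the sunflower bound $|\mathcal{C}_f(x^\star)|\le S(g)^{O(\Delta)}$) are plausible but unnecessary for the paper's route: the size of the search space falls out of the $h^{O(k)}$ recursion count in \Cref{lem:runtime}.
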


} 




\Cref{thm:main-intro,thm:main-intro-non-boolean} give the first algorithms with query complexity that \violet{evades the curse of dimensionality, and indeed, strongly so}.  
We contrast our query complexity to that of ball search, a simple and natural algorithm for finding counterfactuals: first query $f$ on $x^\star$ and all instances that differ from $x^\star$ by a single feature. (By the monotonicity of $f$, for any feature, it suffices to query $f$ on the instance that differs maximally from $x^\star$ on that feature.)  Next, query $f$ on the instances that differ by two features, and so on, until a counterfactual is found.  This algorithm has query complexity $d^{\,O(\Delta_f(x^\star))}$, an exponentially worse dependence on $d$  than ours.  Prior to our work, this was the best known query complexity even just to return a {\sl single} optimal counterfactual.  

\paragraph{Lower bounds.} We complement~\Cref{thm:main-intro,thm:main-intro-non-boolean} with a couple of lower bounds.  All our lower bounds hold even in the setting of binary features $(X= \zo)$, and \violet{against randomized algorithms that are only required to successfully return an optimal counterfactual with a small probability (say $0.01$)}.  We first show that the query complexity of our algorithm is essentially optimal: 

\violet{\begin{theorem}[Optimality of~\Cref{thm:main-intro,thm:main-intro-non-boolean}, see \Cref{thm:lb-monotone} for the formal version]
\label{thm:lb-intro}
For any algorithm $\mcA$ and $S \in \N$, there is a monotone model $f : \zo^d\to \zo$ with $S(f) = S$ and an instance $x^\star$ such that $\mcA$ must make  
\[ S^{\,\Omega(\Delta_f(x^\star))} + \violet{\Omega}(\log d)\] 
many queries to $f$, even just to find a single optimal counterfactual for $x^\star$.
\end{theorem}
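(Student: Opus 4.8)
The plan is to establish the two terms of the bound separately — this suffices because $a+b=\Theta(\max\{a,b\})$ — and then fuse the two hard instances into a single monotone $f$ via a ``master switch.'' Throughout I take $x^\star=0^d$, so that an optimal counterfactual is exactly a minimum-Hamming-weight $1$-input of $f$, i.e.\ a minimum-size minterm; and I argue against randomized algorithms by exhibiting, via Yao's principle, a distribution over hard instances on which every deterministic algorithm with the claimed query budget fails with probability more than $0.99$.

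The $\Omega(\log d)$ term is the textbook needle-in-a-haystack bound. Take $f(x)=x_i$ for a uniformly random $i\in[d]$: then $f(0^d)=0$, the unique optimal counterfactual is $e_i$ so $\Delta_f(0^d)=1$, and $S(f)=1$; since each query reveals a single bit of $f$, recovering $i$ with probability at least $0.01$ requires $\Omega(\log d)$ queries by a standard entropy argument. Rather than inflate the sensitivity of this dictator artificially, I will reintroduce the hidden coordinate $i$ inside the composed function described below.

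The $S^{\Omega(\Delta_f(x^\star))}$ term is the crux. Here I would construct a family $\{f_z\}$ of monotone functions, indexed by a secret $z$ drawn uniformly from a set of size $N=S^{\Theta(k)}$ (picture $z$ as a root-to-leaf path in a complete $S$-ary tree of depth $k$), with four properties: (i) $\Delta_{f_z}(0^d)=k$ for every $z$; (ii) the minimum minterms of $f_z$ pin down $z$, so producing one optimal counterfactual is at least as hard as identifying $z$ (in particular $f_z$ has no minterm of size $\le k$ common to all $z$); (iii) $S(f_z)=S$ for every $z$; and (iv) the family is ``locally uninformative'': for every query point $x$ and every set $Z$ of still-consistent secrets, the adversary can answer $f(x)$ with a bit that is correct for all but at most $N^{1-\Omega(1)}$ secrets of $Z$. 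Given (iv), an adversary that always answers with this bit eliminates at most $N^{1-\Omega(1)}$ secrets per query and therefore keeps at least two secrets consistent for $S^{\Omega(k)}$ queries, so no such algorithm can have determined a minimum minterm; averaging over $z$ yields the bound. Finally I would combine the two gadgets by letting $f(x)=x_i\wedge f_z(x|_{B})$, where $B$ is a block of $\mathrm{poly}(N)$ coordinates disjoint from $i$ and $i$ is uniform in $[d]$: one checks that this keeps the sensitivity at $S$ up to an additive constant (which can be absorbed by reparametrizing $S$), sets $\Delta_f(0^d)=k+1$, and forces $\Omega(\log d)+S^{\Omega(k)}$ queries, since the algorithm must both locate the switch $i$ and then solve the embedded gadget.

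The hard part is reconciling (iii) with (iv). A monotone function with many well-separated minimum minterms typically has a $0$-input one bit-flip away from a huge number of them — for instance, if the minterms are all root-to-leaf paths of a depth-$k$, arity-$S$ tree, the input consisting of every non-leaf edge is one flip away from all $S^k$ of them — which blows the sensitivity up to $\mathrm{poly}(N)$; conversely, clustering the minterms to cap the sensitivity at $S$ tends to expose one of them to a binary-search-style descent, violating (iv). I expect the resolution to be a recursively defined minterm family in which distinct minterms overlap in a tightly controlled pattern, so that every $0$-input is adjacent to at most $S$ of them while the recursion still forces a depth-$k$, arity-$S$ search; proving both the $S(f_z)=S$ bound and the per-query information bound (iv) for such a family is where the bulk of the work lies. (The formal version, \Cref{thm:lb-monotone}, presumably carries out exactly this.)
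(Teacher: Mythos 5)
Your high-level framework matches the paper's: apply Yao's minimax principle, prove the two terms separately (which suffices since $a+b \le 2\max(a,b)$), and use a hidden dictator $f(x)=x_i$ for the $\Omega(\log d)$ term. But you leave the construction for the $S^{\Omega(\Delta)}$ term as an open problem: you correctly enumerate the four properties a hard family $\{f_z\}$ should satisfy, identify the tension between low sensitivity (iii) and local uninformativeness (iv), and then conjecture that the resolution is a ``recursively defined minterm family with tightly controlled overlap,'' declaring that verifying its properties is ``where the bulk of the work lies.'' That is a genuine gap, and the conjectured design space (depth-$k$, $S$-ary trees of minterms) substantially over-estimates the difficulty and points in the wrong direction.

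The paper's construction is far simpler: a single planted minterm on top of a threshold decoy, with all structure confined to the first $S$ coordinates. Draw $z$ uniformly from the $\binom{S}{\Delta}$ strings in $\zo^S$ of Hamming weight $\Delta$, and set
\[
f_z(x) \coloneqq
\begin{cases}
1 & \text{if at least $\Delta+1$ of $x_1,\dots,x_S$ equal $1$,}\\
1 & \text{if $(x_1,\dots,x_S)=z$,}\\
0 & \text{otherwise.}
\end{cases}
\]
This dissolves your (iii)-vs-(iv) tension outright: since $f_z$ is a junta on the first $S$ coordinates, $S(f_z)\le S$ holds automatically, with no overlap bookkeeping. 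For (iv), the decoy $g$ is the plain threshold function obtained by deleting the $z$-clause; it is a single, $z$-independent function that disagrees with $f_z$ only on inputs whose first $S$ coordinates equal $z$. A deterministic algorithm answered according to $g$ makes a fixed query sequence $x^{(1)},\dots,x^{(q)}$ and outputs a fixed guess; each $x^{(j)}$ matches $z$ on the first $S$ coordinates with probability at most $1/\binom{S}{\Delta}$, and the fixed guess equals the unique optimal counterfactual (namely $z$ padded with $d-S$ zeros) with probability at most $1/\binom{S}{\Delta}$; union-bounding gives success probability at most $(q+1)/\binom{S}{\Delta}$, and under $\Delta \le S^c$ this forces $q=S^{\Omega(\Delta)}$. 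Your intuition that ``many well-separated minimum minterms force high sensitivity'' is correct but moot here: each $f_z$ has exactly one weight-$\Delta$ minterm, and the hardness comes from the algorithm's uncertainty about which $z$ was drawn, not from any single $f_z$ having a combinatorially rich minterm structure.

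One smaller remark: your master-switch composition $f(x)=x_i\wedge f_z(x|_B)$ is unnecessary. Proving the two lower bounds on separate instances and taking the maximum already yields the stated additive bound up to constant factors, which is exactly how the paper concludes.
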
 }




We also establish the inherent limitations of {\sl local search} algorithms, a broad and natural class of algorithms for finding counterfactuals.  A local search algorithm is any algorithm whose first query is $x^\star$, and whose every subsequent query differs from a previous one by exactly one feature.  For example, ball search is a local search algorithm.  We show that no local search algorithm can achieve the query complexity that our algorithm does, even for low-sensitivity monotone functions:

\violet{
\begin{theorem}[Lower bound for local search algorithms, see \Cref{thm:lb-local-formal} for the formal version]\label{thm:local-lb} 
    For any local search algorithm $\mcA$ and $d \in \N$, there is a monotone model $f : \zo^d\to \zo$ with $S(f) = 1$ and an instance $x^\star$ with $\Delta_f(x^\star) = 1$, such that $\mcA$ must make $\Omega(d)$ many queries to $f$ to find a single optimal counterfactual for $x^\star$.
\end{theorem}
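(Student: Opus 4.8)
The plan is to first pin down exactly what the hypotheses allow, and then solve the resulting combinatorial problem with an adaptive adversary. The key structural observation is that a monotone $f:\zo^d\to\zo$ has $S(f)=1$ if and only if $f$ is a dictator, $f(x)=x_{i^\star}$ for some $i^\star\in[d]$: at a minterm (minimal $1$-input) $m$ of a monotone function, flipping any coordinate in $\mathrm{supp}(m)$ leads to a $0$-input, so $s(f,m)\ge|\mathrm{supp}(m)|$; thus $S(f)=1$ forces every minterm to have weight at most $1$, so $f$ is an OR of variables (or a constant), and an OR of two or more variables has sensitivity $2$ at $\mathbf 0$ while constants have sensitivity $0$. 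For a dictator $f=x_{i^\star}$ and $x^\star=\mathbf 0^d$ we have $\Delta_f(\mathbf 0)=1$, with the unique optimal counterfactual being $e_{i^\star}$. Hence it suffices to prove that any local search algorithm needs $\Omega(d)$ queries to determine $i^\star$ given query access to the dictator $x_{i^\star}$ and starting point $\mathbf 0$.

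The combinatorial core is an adversary argument exploiting the ``walk'' structure of local search. If the queries are $q_1=\mathbf 0, q_2,\dots$ in the order made, then for each $t\ge 2$ there is an $s<t$ and a coordinate $c_t$ with $\mathrm{supp}(q_t)=\mathrm{supp}(q_s)\triangle\{c_t\}$, and querying $q$ only reveals the single bit $\mathbf 1[\,i^\star\in\mathrm{supp}(q)\,]$. I would have the adversary answer $f(q_t)=0$ to every query while maintaining a set $R_t\subseteq[d]$ of ``killed'' coordinates with the invariants $\mathrm{supp}(q_t)\subseteq R_t$ and $|R_t|\le t-1$. For the inductive step, if $c_t\in R_{t-1}$ then $\mathrm{supp}(q_t)\subseteq R_{t-1}$, so the answer $0$ is consistent with $x_i$ for every $i\notin R_{t-1}$ and we keep $R_t:=R_{t-1}$; if $c_t\notin R_{t-1}$ then $\mathrm{supp}(q_t)=\mathrm{supp}(q_s)\cup\{c_t\}$ with $\mathrm{supp}(q_s)\subseteq R_{t-1}$, so the answer $0$ is consistent with $x_i$ for every $i\notin R_{t-1}\cup\{c_t\}$ and we set $R_t:=R_{t-1}\cup\{c_t\}$. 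Either way the invariant survives, so after $T$ queries the all-zero transcript is simultaneously consistent with $x_i$ for all $i$ in a set $U$ of size $\ge d-T+1$.

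To finish, I would invoke Yao's principle with $i^\star$ uniform over $[d]$. Fixing the algorithm's coins makes it deterministic; running it against the all-zero oracle produces queries $q_1,\dots,q_T$ and, by the above, a transcript consistent with $x_i$ for all $i\in U$, $|U|\ge d-T+1$, so for every such $i$ the algorithm behaves identically and outputs the same instance, which can equal the true optimal counterfactual $e_i$ for at most one $i\in U$. Thus the overall success probability is at most $T/d$, which is below $0.01$ unless $T=\Omega(d)$; averaging over coins and then choosing the worst $i^\star$ yields a single hard instance for each algorithm, as in the statement of \Cref{thm:local-lb}.

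The only place genuine care is needed is confirming that the walk structure really does cap the information gained per query: one might worry the algorithm queries a ``balanced'' subset of coordinates to binary-search $i^\star$ in $O(\log d)$ steps, but reaching a subset of size $w$ already costs $w$ queries, and the $R_t$-bookkeeping shows each of those intermediate queries touches at most one not-yet-killed coordinate, so there is no shortcut. I would also double-check the edge case where a new query is adjacent to several previously made queries at once — the invariant $\mathrm{supp}(q_t)\subseteq R_t$ handles it, since then $c_t$ is already forced into $R_{t-1}$ — but it is worth verifying carefully.
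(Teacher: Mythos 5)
Your proposal is correct and follows essentially the same route as the paper's proof of \Cref{thm:lb-local-formal}: a Yao-style reduction to a uniformly random dictator $f(x)=x_{\bi}$ with $x^\star=\mathbf 0$, followed by the observation that a deterministic local algorithm answered all-zeros traces out a fixed walk from $\mathbf 0$ that can touch at most one new coordinate per query, giving success probability at most $q/d$. Your explicit $R_t$ bookkeeping and the preliminary classification of sensitivity-$1$ monotone functions as dictators are pleasant elaborations, but the underlying argument is the same.
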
}

\subsection{Overview of our algorithm and techniques}

\Cref{thm:main-intro,thm:main-intro-non-boolean} circumvent the lower bound of~\Cref{thm:local-lb} via a novel approach that is {\sl not} based on local search.  The crux of our approach is a new algorithm for understanding monotone black box models that we believe will see further utility beyond counterfactual explanations: using only queries to a monotone black box model~$f$, this algorithm allows us to navigate a decision tree representation $T$ of $f$ without actually building $T$ in full.  Crucially, since we strive to handle arbitrarily complex models $f$, this tree $T$ may be too large to build efficiently in its entirety. 

With this algorithm, given any instance $x^\star$, we will build only the ``necessary part" of $T$ to find all optimal counterfactuals for $x^\star$.  We show that just a tiny portion of $T$ suffices for this purpose: it suffices to build only the root-to-leaf path $\rho$ in $T$ that $x^\star$ follows and the paths that are ``close to" $\rho$ in $T$ \violet{in a sense that we will make precise (see the beginning of~\Cref{sec:proof-of-main} for more details on this notion of `` path distance")}. Writing $h$ to denote the depth of $T$, we prove that we only have to construct \red{$h^{O(\Delta_f(x^\star))}$} many nodes in $T$, which is only an exponentially small part of $T$.  Next, by bringing together classical~\cite{Nis89} and recently-develop results~\cite{BKLT22} about the structure of monotone functions, we are able to bound the depth of $T$ and in terms of $f$'s sensitivity.

To summarize, our overall approach allows us to enjoy the benefits of having access to a highly interpretable representation of a black box model $f$---a decision tree representation---without paying the price associated with $T$ being intractably large for most complex models of interest.

\subsection{Related work} 

Our work is theoretical in nature: we give an algorithm with strong bounds on its query complexity and all the counterfactuals that it returns are guaranteed to be optimal.  We view our main contribution as proving \violet{the curse of dimensionality can be strongly evaded} for a broad and natural class of models---all monotone models. 

There is a substantial body of empirical work on counterfactual explanations.  The algorithms in these works either do not guarantee optimal counterfactuals or do not come with a formal analysis of their query complexity.  Many of these works rely on classic optimization tools:  the algorithms of~\cite{WMR17,MST20} are based on gradient descent and hence are restricted to differentiable models, whereas the algorithms of~\cite{Rus19,USL19} use mixed integer programming and are limited to linear models.  \cite{KBBV20} encode the problem of finding optimal counterfactuals as a satisfiability problem, which they then solve using  SMT solvers.

The importance of diversity in counterfactual explanations has been highlighted in several works~\cite{WMR17,Rus19,MST20,KBBV20}. Significant motivation comes from the need for actionable recourse~\cite{USL19}: a large set of counterfactuals is more likely to contain one that is actionable (e.g.~in the context of a loan denial, an applicant can plausibly work towards earning a higher income but cannot change their history of defaults).  

Regarding our techniques, they are in the spirit of a recent line of \violet{theoretical} work on {\sl implicit} learning algorithms: algorithms that are able to access their hypotheses efficiently without constructing them in full~\cite{KV18,BH18,KVB20,BBG20,BLT21neurips}.  In particular, Blanc, Lange, and Tan~\cite{BLT21neurips} show how, given queries to a model $f$, one can efficiently navigate an implicit decision tree hypothesis $T$ that is $\eps$-close to $f$ under the uniform distribution.  The presence of errors in the hypothesis, and the fact that these errors are measured with respect to the uniform distribution, are significant limitations of their result.  Our work shows how these limitations can be overcome, in a strong sense, in the case of monotone models $f$: our implicit decision tree hypothesis $T$ computes $f$ {\sl exactly}, which is crucial for our application to finding counterfactuals.    Another limitation of~\cite{BLT21neurips}'s analysis is that it only applies to the setting of binary features; we do not need this assumption. 

Finally, we mention that counterfactual explanations are part of a broader landscape of local  explanations~\cite{SK10,BSHKHM10,SVZ14,RSG16,KL17,LL17,RSG18}, which seek to explain a model's prediction for specific inputs.  Global explanations, on the other hand, approximate the entire model with a simple and interpretable one~\cite{CS95,BS96,AB07,ZH16,VLJODV17,BKB17,VS20,LKCL19,LAB20}.

\subsection{Preliminaries} 
\label{sec:prelims} 
We rely on a few standard notions from the study of Boolean functions.

\violet{ 
\begin{definition}[Sensitivity]
For a function $f : X^d \to \zo$ and an instance $x\in X^d$, the {\sl sensitivity of $f$ at $x$} is the quantity 
\[ S_f(x) = |\{ i \in [d] \colon\, \exists\, a \in X\ \text{s.t.}\ f(x) \ne f(x_{i\leftarrow a})\}|, \] 
where $i\leftarrow a$ denotes $x$ with its $i$-th feature set to $a$.  That is, $S_f(x)$ is the number of features $i$ of $x$ for which there is a way of changing $x_i$ that flips $f$'s value on $x$.  

The {\sl sensitivity of $f$} is the quantity $\ds S(f) = \max_{x\in X^d}\{ S_f(x)\}.$ 
\end{definition}

The sensitivity of a function can be viewed as a discrete analogue of the Lipschitz constant, with low-sensitivity discrete functions being considered smooth.  To see this analogy, we observe that for all $x \in X^d$ and $\delta \in (0,1)$, 
\[ \mathop{\Ex_{\by\in X^d}}_{\Delta(x,\by)=\delta d}\big[|f(x)-f(\by)|\big] \le \delta\cdot S(f). \] 
See~\cite{GNSTW16} for more on this perspective.  Introduced by Cook, Dwork, and Reischuk~\cite{CDR86}, the sensitivity of discrete functions is the subject of intensive study in theoretical computer science; it is, for example, the key notion in the recent breakthrough {\sl Sensitivity Theorem} of Huang~\cite{Hua19} (see also~\cite{Knu19}), resolving a longstanding conjecture of Nisan and Szegedy~\cite{NS94}. 
}

\paragraph{Monotonicity.}{
For a model $f:X^d\to\zo$, we assume an ordering $\le$ on the elements of $X$ which we lift coordinatewise to a partial ordering on $X^d$. That is, for $x,y\in X^d$, $x\le y$ if and only if $x_i\le y_i$ for all $i\in[d]$. We say $f$ is \textit{monotone} if it is monotone with respect to such an ordering. We'll also assume that $X$ has a maximum and minimum element.
}

\paragraph{Structure of the rest of this paper.} We first prove \Cref{thm:main-intro}, which focuses on functions with Boolean features. To do so, we will recall the notion of an implicit decision tree from \cite{BLT21neurips} in \Cref{sec:implicit}. Then, in \Cref{sec:proof-of-main}, we prove \Cref{thm:main-intro} assuming a sufficiently good implicit decision tree exists. In \Cref{sec:build-idt}, we complete the proof of \Cref{thm:main-intro} by showing how to build that implicit decision tree. 

We prove our lower bounds, \Cref{thm:lb-intro,thm:local-lb}, in \Cref{sec:lb}. In \Cref{sec:general-feature}, we show how to extend our algorithm to the setting of arbitrary features by reducing to the Boolean setting and prove \Cref{thm:main-intro-non-boolean}.

\section{Implicit decision trees for monotone models} 
\label{sec:implicit}
Our algorithm for finding counterfactuals will rely on having access to a decision tree $T$ which exactly represents the model $f:\zo^d\to\zo$.  At a high level, for an instance $x^\star$, the algorithm follows a unique root-to-leaf path $\rho$ in $T$ searches all root-to-leaf paths in $T$ that are ``close to"  $\rho$ in a sense that make precise in the next section.  In general, the size of $T$ will be exponentially larger than the search space of our counterfactual finding algorithm, and hence it is advantageous to avoid computing the entire tree. Instead, we maintain an \textit{implicit} copy of the tree $T$. This copy allows us to compute only the parts of the tree that we need to find optimal counterfactuals for $x^\star$.

\paragraph{Restrictions.}{An implicit decision tree $T$, a notion introduced in~\cite{BLT21neurips}, is an algorithm which given query access to $f$ can navigate $T$ without building it in full. A node in $T$ is specified by a \textit{restriction} which is a partial function $\rho:[d]\rightharpoonup\{0,1\}$ indicating which (if any) features are fixed to specific values. These features and their values correspond to those queried along a path in the tree. We write $\text{Dom}(\rho)\sse [d]$ to denote the domain of the restriction. The size of a restriction $|\rho|$ is $|\text{Dom}(\rho)|$, the number of values on which it is defined.} We write $f_\rho:\zo^{d-|\text{Dom}(\rho)|}\to\zo$ to denote the restriction of $f$ to $\rho$. That is, $f_\rho(x)$ is $f$ evaluated on the instance $x'\in\zo^d$ which is formed by inserting features specified by $\rho$ into the instance $x$. For example, $\rho=\{1\mapsto 1,4\mapsto 1\}$ is the restriction where the $1$st and $4$th features are fixed to $1$. If $f:\zo^4\to\zo$, then $f_\rho:\zo^2\to\zo$ and e.g. $f_\rho(00)=f(1001)$ under this restriction.

\begin{definition}[Implicit decision tree]
\label{def:implicit}
    An implicit decision tree (IDT) \blue{$T$} for $f:\zo^d\to\zo$ is an algorithm which has query access to $f$ and supports the following operations. For a restriction $\rho:[d]\rightharpoonup \zo$ corresponding to features queried along a root-to-$\rho$ path in $T$ and auxiliary information $A\sse [d]$:
    \begin{enumerate}
        \item {\sc IsLeaf}$_f(\rho, A)$ indicates whether $\rho$ is a leaf in $T$.
        \item {\sc Query}$_f(\rho, A)$, for a non-leaf $\rho$, returns $i\in [d]$ corresponding to the index queried at $\rho$ in $T$ and $A'\subseteq [d]$, updated auxiliary information.
    \end{enumerate} 
    If each operation uses at most $q$ queries to $f$ with high probability then the algorithm is a $q$-query implicit decision tree for $f$.\footnote{Note that the term ``query'' here is being used in two separate ways. A variable being ``queried'' on a path simply means that variable appears as an internal node on that path. The IDT itself has ``query access'' to $f$ meaning it may query the value of $f(x)$ for various $x$ in its implementation.}
\end{definition}


\paragraph{The role of auxiliary information.}{ Our algorithm for finding counterfactuals updates the auxiliary information after each query and we assume that the auxiliary information being passed to the algorithm originates from the previous query (the auxiliary information at the root can be arbitrary). Internally, the auxiliary information will be a subset of features and will allow us to precompute parts of the tree instead of node by node. This way, most calls to {\sc Query}$(\rho, A)$ will simply return the appropriate feature index from $A$ until all such features in $A$ are exhausted after which a new $A$ is computed from scratch. The example below illustrates the reuse of auxiliary information between IDT operations. For a more detailed illustration of an IDT and the precomputation of parts of the tree using auxiliary information see \Cref{fig:path_thru_idt}.}

\paragraph{Example.}{
\Cref{fig:IDT_query_example} illustrates how the IDT operations can be used to walk down a decision tree $T$ representing a function $f$. In this case, we make three calls to {\sc Query}$_f(\cdot,\cdot)$ corresponding to the depth of this particular root-to-leaf path. In general, if a $q$-query IDT has depth at most $k$, then the root-to-leaf path corresponding to an instance $x^\star$ can be constructed using $O(k)$ calls to the IDT operations and therefore $O(kq)$ queries to $f$. 

\begin{figure}[h]
\begin{center}
\forestset{
 default preamble={
 for tree={
  circle,
  l sep=0.8cm,
  s sep=0.5cm,
  scale=1.0
 }
 }
}
\scalebox{1}{
\begin{forest}
 [$\color{blue}x_1$,name=x1,
    [$x_2$,name=x2
        [$x_4$,name=x4l
            [$0$]
            [$1$]
        ]
        [$x_5$,name=x5l
            [$0$]
            [$1$]
        ]
    ]
    [$\color{blue}x_3$,name=x3
        [$\color{blue}x_5$,name=x5r
            [$\color{blue}0$,name=x5r0]
            [$1$]
        ]
        [$x_4$,name=x4r
            [$0$]
            [$1$]
        ]
    ]
 ]
 \draw[line width=0.35mm, blue] (x1) to (x3);
 \draw[line width=0.35mm, blue] (x3) to (x5r);
 \draw[line width=0.35mm, blue] (x5r) to (x5r0);
\draw[color=black] (.east)+(10em,-4em) node[text width=7cm] (label) [right] {
    \footnotesize 
    \begin{enumerate}
        \item {\sc IsLeaf}$_f(\{\}, \{\})=\text{No}$
        \item {\sc Query}$_f(\{\}, \{\})=(1,A_1)$
        \item {\sc IsLeaf}$_f(\{1\mapsto 1\}, A_1)=\text{No}$
        \item {\sc Query}$_f(\{1\mapsto 1\}, A_1)=(3,A_2)$
        \item {\sc IsLeaf}$_f(\{1\mapsto 1,3\mapsto 0\}, A_2)=\text{No}$
        \item {\sc Query}$_f(\{1\mapsto 1,3\mapsto 0\}, A_2)=(5,A_3)$
        \item {\sc IsLeaf}$_f(\{1\mapsto 1,3\mapsto 0,5\mapsto 0\}, A_3)=\text{Yes}$
    \end{enumerate}
    };
\end{forest}
}
\end{center}
\caption{Walking down a tree $T$ representing a function $f:\{0,1\}^5\to\{0,1\}$ using the IDT operations. The input is $x^\star=10000$ and its computation path through the tree is colored in blue. The walk starts at the root with the empty restriction $\{\}$ and ends at the leaf node represented by the restriction $\{1\mapsto 1,3\mapsto 0,5\mapsto 0\}$. The decision tree leaf value here is $0$ indicating that $T(x^\star)=f(x^\star)=0$.}
\label{fig:IDT_query_example}
\end{figure}
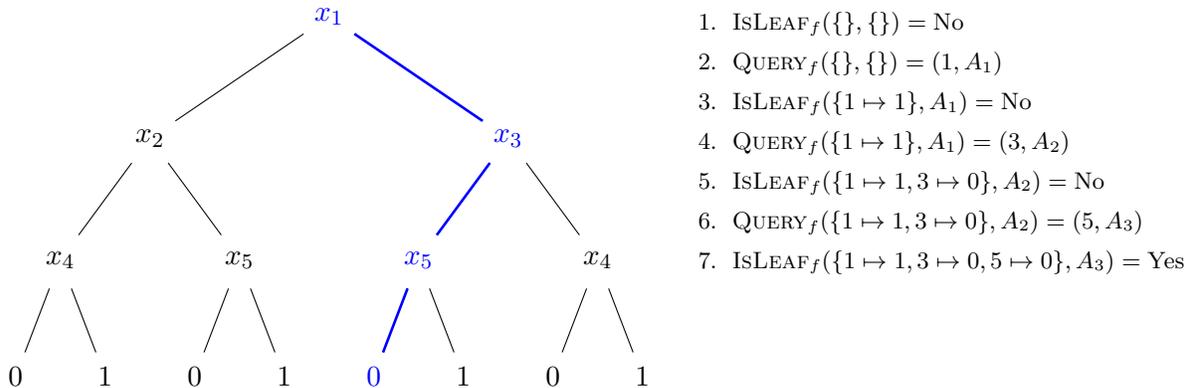
}

\paragraph{Our key technical lemma.} For our intended application, it will be important that the IDT {\sl exactly} represents $f$.  The work of~\cite{BLT21neurips} design a relaxed variant of IDTs that only {\sl approximates} models $f$ with respect to the uniform distribution over instances.  Both the presence of errors and the uniform-distribution assumption are not ideal, and can be seen to be inherent to the proof technique of~\cite{BLT21neurips}.  In this work we use recently developed structural results for monotone models~\cite{BKLT22} to overcome these limitations.  Our key technical lemma is the following.  

\begin{lemma}[Exact IDTs for monotone models]
    \label{lem:efficient_IDTs}
    Let $f:\zo^d\to\zo$ be monotone. Then there is an $S(f)^{O(1)}\cdot\log d$-query IDT for $f$ with depth at most $S(f)^{O(1)}$. 
\end{lemma}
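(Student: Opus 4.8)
The plan is to construct the IDT by specifying a natural decision-tree representation of a monotone $f$ and then showing that each node of this tree can be navigated using few queries to $f$. The decision tree I have in mind is the canonical one built greedily: at a node corresponding to restriction $\rho$, if $f_\rho$ is constant then $\rho$ is a leaf; otherwise we need to pick a variable to query. The key structural input is from \cite{BKLT22}: a monotone $f$ has a decision tree of depth bounded in terms of $S(f)$ — presumably depth $S(f)^{O(1)}$ — and moreover this depth bound is inherited by all restrictions $f_\rho$ (since a restriction of a monotone function is monotone and has sensitivity at most $S(f)$). This immediately gives the depth claim; the work is in the query-efficiency of the navigation operations.

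For the {\sc IsLeaf}$_f(\rho, A)$ operation: $\rho$ is a leaf exactly when $f_\rho$ is constant. By monotonicity, $f_\rho$ is constant iff $f_\rho(\vec 0) = f_\rho(\vec 1)$, i.e.\ iff $f(\rho \text{ with unfixed coords set to } 0) = f(\rho \text{ with unfixed coords set to } 1)$. So {\sc IsLeaf} costs exactly $2$ queries. For {\sc Query}$_f(\rho, A)$ on a non-leaf node, I would pick the variable to query using a sensitivity-based criterion — e.g.\ the next relevant variable, or the most sensitive variable of $f_\rho$ at an appropriate point. The classical result of Nisan \cite{Nis89} relating decision-tree depth, sensitivity, and block sensitivity for monotone functions is the natural tool here: for monotone $f$, a sensitive variable at a carefully chosen input is relevant and querying it makes progress toward a leaf. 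The $\log d$ factor in the query bound should come from the cost of \emph{finding} such a variable: naively one might examine all $d$ coordinates, but using random sampling one can identify a sensitive coordinate (or a relevant one) with $O(\log d)$ queries w.h.p., since there are at most $S(f)$ sensitive coordinates at any point and one can binary-search / subsample over blocks of coordinates. The role of the auxiliary information $A$ is to amortize: when we locate a chunk of relevant variables (say, all $S(f)$ sensitive ones at the current point, or a whole subtree's worth of structure), we cache them in $A$ so that the next several {\sc Query} calls just read off an index from $A$ for free, and we only pay the $\log d$ cost once every $S(f)^{O(1)}$ calls. Spreading $S(f)^{O(1)}\cdot \log d$ total cost over $S(f)^{O(1)}$ calls along any root-to-leaf path then gives the per-operation bound of $S(f)^{O(1)}\cdot\log d$ (or, if one is content with the total, $S(f)^{O(1)}\cdot\log d$ for an entire path, which is what the downstream application actually needs).

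The main obstacle I anticipate is making the navigation \emph{consistent} — the IDT must define a single fixed tree $T$ that exactly computes $f$, so the variable returned by {\sc Query}$_f(\rho, A)$ must depend only on $\rho$ (and deterministically on $f$), not on the random coins or on the path by which we arrived at $\rho$. Since we want to identify relevant/sensitive variables using randomized sampling with only $O(\log d)$ queries, we must ensure that w.h.p.\ the randomized procedure returns the \emph{canonical} choice — e.g.\ the lexicographically-least variable satisfying the sensitivity criterion — rather than an arbitrary sensitive variable. This requires the sampling subroutine to verify its candidate and to be run with enough repetition/confidence that, union-bounded over all $\leq S(f)^{O(1)}$ nodes on the path, it never errs; the $\text{poly}(d)$ success probability in the w.h.p.\ guarantee gives the necessary room. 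A secondary technical point is checking that the structural theorems of \cite{BKLT22} and \cite{Nis89} genuinely yield a depth-$S(f)^{O(1)}$ tree under the specific greedy/canonical variable-selection rule we adopt (as opposed to merely the existence of \emph{some} shallow tree); I expect this follows because the canonical tree's depth is dominated by the certificate complexity / block sensitivity, which for monotone functions is $O(S(f)^2)$ by Nisan's argument.
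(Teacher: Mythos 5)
Your high-level picture is right in several places: {\sc IsLeaf} costs $O(1)$ queries by monotonicity (check $f_\rho(\vec 0)$ vs.\ $f_\rho(\vec 1)$), the tree $T$ you want is an exact representation built on Nisan's relation $C(f)=S(f)$ for monotone $f$, the auxiliary information $A$ is used to cache a batch of variables so that the $\log d$ cost is paid once per batch rather than once per node, and the BKLT22 machinery supplies the query-efficient primitive. But there are two genuine gaps.

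First, the {\sc Query} rule you propose --- ``pick the next sensitive (or relevant) variable of $f_\rho$'' --- is a different algorithm from the one that makes the depth bound go through, and it is not clear it works. The paper instead organizes $T$ into \emph{blocks}: at a node $\rho$ where $A$ is exhausted, it calls a certificate-finding routine $\textsc{Cert}(f_\rho)$ (Theorem~5 of \cite{BKLT22}, costing $S(f)^{O(1)}\log d$ queries) that returns a whole small certificate of $f_\rho$, stores that certificate as the new $A$, and then exhaustively queries \emph{all} of $A$ before calling $\textsc{Cert}$ again. The reason the block structure matters is the progress measure behind the depth bound (the paper's Lemma~\ref{lem:restricting_to_constant}, implicit in Theorem~6 of \cite{BKLT22}): if you restrict by an \emph{entire} certificate, then $C_0(f_\rho)+C_1(f_\rho)$ drops by at least~1, because every $0$-certificate intersects every $1$-certificate. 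Querying a \emph{single} sensitive variable does not obviously decrease any such potential, so ``depth is dominated by $C(f)^2$ for the canonical greedy tree'' is exactly the step you wave at but do not supply. Your appeal to ``Nisan's argument'' is pointing at the right $O(S(f)^2)$ target, but Nisan's $D(f)\le C_0(f)C_1(f)$ bound \emph{is} the block argument --- you need it as the querying rule, not just as a post-hoc bound on an unspecified tree.

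Second, ``find a sensitive coordinate in $O(\log d)$ queries by random sampling / binary search'' is not the primitive you need and is under-specified. The paper needs an entire certificate of size $S(f)^{O(1)}$, found in $S(f)^{O(1)}\log d$ queries, which is what Theorem~5 of \cite{BKLT22} provides; one sensitive coordinate per call would not feed the block-restriction argument. Your instinct about amortizing via $A$ is correct (that is precisely the role $A$ plays: it stores the current block's certificate), and your worry about consistency of the tree is a fair one to raise, but the missing ingredients are the $\textsc{Cert}$ subroutine and, more importantly, the ``certificates shrink after certificate-restriction'' lemma that converts the per-block size bound into the global $S(f)^{O(1)}$ depth bound.
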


In the next section we prove~\Cref{thm:main-intro} assuming~\Cref{lem:efficient_IDTs}.  We then prove~\Cref{lem:efficient_IDTs} in~\Cref{sec:proof-of-lemma}.


\section{Using IDTs to find optimal counterfactuals: Proof of~\texorpdfstring{\Cref{thm:main-intro}}{Theorem 1}}

\label{sec:proof-of-main}

Suppose we may access an implicit decision tree $T$ for $f$ as described in \Cref{def:implicit}. We present an algorithm for finding counterfactuals using these operations and analyze its query complexity in terms of the depth $h$ of $T$ \violet{and the query complexity of the IDT}---\Cref{lem:efficient_IDTs} provides strong bounds on both quantities in terms of the sensitivity of $f$.  


\violet{\paragraph{Intuition.} Given an instance $x^\star$, it follows a unique root-to-leaf path $\rho$ in $T$.  We begin with a simple observation: if there is a counterfactual $x'$ of distance $1$ from $x^\star$, it must belong to the set $I_1$ of all instances that differ from $x^\star$ on a single feature {\sl queried along $\rho$}.  The size of $I_1$ is the length of $\rho$, which is at most the depth $h$ of $T$.  Building on this observation, if $x''$ is a counterfactual of distance $2$ from $x^\star$, there must be an instance $x'\in I_1$ such that $x''$ differs from $x'$ on a single feature queried on the path $\pi$ that $x'$ follows in $T$.  This reasoning leads to a bound of $h^{O(\Delta_f(x^\star))}$ on the size of our search space---all paths in $T$ of ``path distance" $\Delta_f(x^\star)$ from $\rho$---which stands in contrast to $d^{O(\Delta_f(x^\star))}$, the size of the search space of ball search.}





\violet{\subsection{A helper algorithm that finds minimal counterfactuals}  

Our algorithm for finding optimal countefactuals will call on a subroutine, {\sc FindMinimal} (\Cref{fig:FindMinimal}), for finding {\sl minimal} counterfactuals:} 

\begin{definition}[Minimal counterfactual]
For a model $f: \zo^d\to\zo$ and an instance $x^\star$, a {\sl minimal counterfactual for $x^\star$ among $I \sse \zo^d$} is an instance $x' \in I$ such that:
\begin{enumerate} 
\item $f(x') \ne f(x^\star)$, and 
\item Let $V\sse [d]$ denote the features that $x^\star$ and $x'$ differ on.  Then $f(x'') = f(x^\star)$ for all instances $x''\in I$ that differ from $x^\star$ only on the features in a strict subset $U\subset V$. 
\end{enumerate} 
\end{definition}

\violet{We observe that optimal counterfactuals are minimal, but minimal counterfactuals may not be optimal.}



\violet{Our pseudocode for {\sc FindMinimal} uses the following notation.}  Recall that a restriction is a partial function $\rho : [d] \rightharpoonup \zo$, which we can equivalently represent as a string $\rho \in \{0, 1, *\}^d$. We use the notation $\rho_{i \gets b}$ to denote the restriction $\rho$ with $\rho_i$ overwritten with $b$.  For $x^\star\in \zo^d$, we define $x^\star$ overwritten by $\rho$, denoted $x^\star_\rho \in \zo^d$, to be the following hybrid input: for all $i\in [d]$,
\[ (x^\star_\rho)_i = \begin{cases}
 x^\star_i & \text{if $\rho_i = *$} \\
\rho_i & \text{if $\rho_i \ne *$.}
\end{cases}\]

\begin{figure*}[h]
  \captionsetup{width=.9\linewidth}
\begin{tcolorbox}[colback = white,arc=1mm, boxrule=0.25mm]
\vspace{3pt}

$\textsc{FindMinimal}_f(\rho, x^\star, k, A)$:
\begin{itemize}[align=left]
	\item[\textbf{Given:}] Access to $f$ via  queries and IDT operations \violet{for a tree $T$ that represents $f$}.  Restriction $\rho$ \violet{corresponding to a path in $T$}, instance $x^\star$, distance bound $k$, auxiliary information $A$. 
	\item[\textbf{Output:}] All minimal counterfactuals $x'$ for $x^\star$ among those that are consistent with $\rho$ and satisfy $|\Delta_f(x',x^\star_\rho)| \le k$. 
\end{itemize}
\begin{enumerate}
	\item If $f(x^\star) \ne f(x^\star_\rho)$, return $\{x^\star_\rho\}$. 
    \item Else if  $\textsc{IsLeaf}(\rho, A)$ or $k = 0$, return $\varnothing$.
    \item Else: 
    \begin{enumerate}
        \item Let $(i, A') \gets \textsc{Query}(\rho, A)$.
		\item Return $\textsc{FindMinimal}_f(\rho_{i \gets x^\star_i}, x^\star, k, A') ~\cup~ \textsc{FindMinimal}_f(\rho_{i \gets \overline{x}^\star_i}, x^\star, k-1, A')$.
    \end{enumerate}
\end{enumerate}

\end{tcolorbox}
\caption{\violet{Helper algorithm $\textsc{FindMinimal}$}}
\label{fig:FindMinimal}
\end{figure*}

\subsubsection{Analysis of {\sc FindMinimal}}

\violet{
\begin{lemma} [Correctness of {\sc FindMinimal}]
\label{lem:correctness}
$\textsc{FindMinimal}_f(\rho, x^\star, k, A)$ returns all minimal counterfactuals $x'$ for $x^\star$ among those that are consistent with $\rho$ and satisfy $|\Delta_f(x',x^\star_\rho)|\le k$. 
\end{lemma}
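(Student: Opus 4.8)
The plan is to prove the statement by induction on the depth $h_\rho$ of the subtree of $T$ rooted at $\rho$; this is well-founded because the two recursive calls in Step~3(b) are made on the children of $\rho$ (subtree-depth $h_\rho-1$) and the distance budget is never driven negative (Step~3 is reached only when $k\ge 1$, and it passes $k$ or $k-1$ downward). Fix $x^\star$ and write $I_\rho^k$ for the set of instances consistent with $\rho$ that differ from $x^\star_\rho$ on at most $k$ features, and $M_\rho^k$ for the minimal counterfactuals for $x^\star$ among $I_\rho^k$; the goal is $\textsc{FindMinimal}_f(\rho,x^\star,k,A)=M_\rho^k$ for every $k\ge 0$. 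First I would record two bookkeeping facts: $x^\star_\rho\in I_\rho^k$, and for every $x'$ consistent with $\rho$, $\Delta(x^\star,x')=\Delta(x^\star,x^\star_\rho)\,\sqcup\,\Delta(x',x^\star_\rho)$, a disjoint union whose first part sits inside $\mathrm{Dom}(\rho)$ and whose second part is disjoint from it. The second fact lets me restate membership in $M_\rho^k$ purely in terms of the \emph{offset} $\Delta(x',x^\star_\rho)$: $x'\in M_\rho^k$ iff $f(x')\neq f(x^\star)$ and no $x''\in I_\rho^k$ with $\Delta(x'',x^\star_\rho)\subsetneq\Delta(x',x^\star_\rho)$ is a counterfactual.

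Next come the base cases. If $f(x^\star)\neq f(x^\star_\rho)$ (Step~1), then $x^\star_\rho$ is a counterfactual in $I_\rho^k$ with offset $\varnothing$, and every other counterfactual consistent with $\rho$ strictly contains its differ-set, so $x^\star_\rho$ witnesses that it is not minimal; hence $M_\rho^k=\{x^\star_\rho\}$, exactly the returned value. If $f(x^\star)=f(x^\star_\rho)$ and $\rho$ is a leaf (Step~2), then---using the exactness of $T$ guaranteed by \Cref{lem:efficient_IDTs}---$f_\rho$ is constant, so $f(x')=f(x^\star)$ for every $x'$ consistent with $\rho$ and $M_\rho^k=\varnothing$. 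If $f(x^\star)=f(x^\star_\rho)$ and $k=0$ (Step~2), then $I_\rho^0=\{x^\star_\rho\}$, which is not a counterfactual, so again $M_\rho^k=\varnothing$.

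For the inductive step (Step~3) we have $f(x^\star_\rho)=f(x^\star)$, $\rho$ not a leaf, $k\ge1$, and $i$ the feature returned by $\textsc{Query}(\rho,A)$. Put $\rho^-:=\rho_{i\gets x^\star_i}$ and $\rho^+:=\rho_{i\gets\overline{x}^\star_i}$; then $x^\star_{\rho^-}=x^\star_\rho$ while $x^\star_{\rho^+}$ is $x^\star_\rho$ with coordinate $i$ flipped. Splitting $I_\rho^k$ on the $i$-th coordinate, I would verify $\{x'\in I_\rho^k:x'_i=x^\star_i\}=I_{\rho^-}^k$ and $\{x'\in I_\rho^k:x'_i=\overline{x}^\star_i\}=I_{\rho^+}^{k-1}$ (the budget drops by one in the second case because such $x'$ satisfy $|\Delta(x',x^\star_\rho)|=1+|\Delta(x',x^\star_{\rho^+})|$), which is precisely why Step~3(b) passes $k$ and $k-1$ to the two children. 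The induction hypothesis turns the two recursive calls into $M_{\rho^-}^k$ and $M_{\rho^+}^{k-1}$, so what remains is the identity $M_\rho^k=M_{\rho^-}^k\cup M_{\rho^+}^{k-1}$.

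The hard part will be this identity, and in particular one direction of it. That $M_\rho^k\cap\{x'_i=x^\star_i\}=M_{\rho^-}^k$ is easy: for such $x'$ the coordinate $i$ lies outside the offset, so any offset-smaller potential witness $x''$ also has $x''_i=x^\star_i$ and hence already lies in $I_{\rho^-}^k\subseteq I_\rho^k$. The delicate claim is that each $x'\in M_{\rho^+}^{k-1}$ is minimal within the \emph{larger} family $I_\rho^k$ and not merely within $I_{\rho^+}^{k-1}$: one must rule out an offset-smaller counterfactual $x''\in I_\rho^k$ with $x''_i=x^\star_i$. This is the step where monotonicity of $f$, used together with the Step~3 hypothesis $f(x^\star_\rho)=f(x^\star)$ (i.e.\ that $x^\star_{\rho^-}$ is not a counterfactual) and with the description of $x'$ as arising via Step~1 (the first $x^\star$-overwrite along its root-to-leaf path in $T$ at which $f$'s value changes), has to be invoked to force a contradiction: minimality of $x'$ inside $I_{\rho^+}^{k-1}$ pins such an $x''$ down to agree with $x'$ off coordinate $i$, and then monotonicity should clash with $f(x^\star_\rho)=f(x^\star)$. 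Making this cross-branch comparison precise---reconciling minimality inside a subtree with minimality inside the whole restricted cube---is the crux; everything else is routine manipulation of restrictions, offsets, and the budget. (If one only needs the consequence the optimal-counterfactual algorithm actually uses, it suffices to prove the weaker ``sandwich'' $M_\rho^k\subseteq\textsc{FindMinimal}_f(\rho,x^\star,k,A)\subseteq\{$counterfactuals in $I_\rho^k\}$; both the ``$\subseteq$'' inclusion and the two-children ``$\supseteq$'' inclusions then follow immediately from $I_{\rho^-}^k,I_{\rho^+}^{k-1}\subseteq I_\rho^k$ and the base cases, avoiding the cross-branch argument entirely.)
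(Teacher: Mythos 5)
Your proposal follows the paper's induction on the depth of the subtree rooted at $\rho$, and your base cases and the partition of the search space in the inductive step match the paper's. But the ``delicate claim'' you flag is more serious than you seem to expect: the two-sided identity $M_\rho^k = M_{\rho^-}^k\cup M_{\rho^+}^{k-1}$ is simply \emph{false}, and no monotonicity argument will rescue it. Here is a small counterexample. Take $f:\zo^3\to\zo$ with $f(x)=x_2\vee(x_1\wedge x_3)$ (monotone), $x^\star=000$, and an IDT whose root block queries the $0$-certificate $\{1,2\}$ with $x_1$ at the top, then queries $x_3$ in the subtree at $\{1\mapsto 1,2\mapsto 0\}$. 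Run $\textsc{FindMinimal}_f(*^3,000,2,\varnothing)$. The $\rho^+=\{1\mapsto 1\}$ call (budget $1$) hits Step~1 at the node $\{1\mapsto 1,2\mapsto 1\}$ and returns $x'=110$; this $x'$ really is minimal inside $I_{\rho^+}^{1}=\{100,110,101\}$, since $010$ is not consistent with $\rho^+$. Yet $010\in I_{\rho^-}^2$ is a counterfactual with $\Delta(000,010)=\{2\}\subsetneq\{1,2\}=\Delta(000,110)$, so $110\notin M_{*^3}^{2}$. The algorithm returns $\{010,101,110\}$, a strict superset of $M_{*^3}^{2}=\{010,101\}$. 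Your hoped-for contradiction between monotonicity and $f(x^\star_\rho)=f(x^\star)$ does not arise: $010>000$ and $f(010)=1\ge 0=f(000)$ is perfectly consistent.

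So the right reading of the lemma, and the only provable one, is exactly your parenthetical ``sandwich'': $M_\rho^k\subseteq\textsc{FindMinimal}_f(\rho,x^\star,k,A)\subseteq\{\text{counterfactuals in }I_\rho^k\}$. Both inclusions are short. The right-hand inclusion holds because Step~1 is the only nonempty return, and it returns an instance $x^\star_\sigma$ consistent with $\rho$, within budget, with $f(x^\star_\sigma)\ne f(x^\star)$. The left-hand inclusion is your induction: every $x'\in M_\rho^k$ lands in exactly one of $I_{\rho^-}^k$ or $I_{\rho^+}^{k-1}$, and since these are subsets of $I_\rho^k$, minimality is only easier to satisfy there, so $x'$ remains minimal and is returned by the corresponding recursive call. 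This sandwich is all $\textsc{FindOptimal}$ uses: at the first $k$ for which the output is nonempty, necessarily $k=\Delta_f(x^\star)$, and then every counterfactual in $I_{*^d}^k$ has distance exactly $\Delta_f(x^\star)$, so the two ends of the sandwich coincide with the set of optimal counterfactuals. The paper's own write-up of this lemma is phrased as if the exact equality were being established and its inductive step silently skips the cross-branch comparison; you were right not to take that step for granted, and right that the one-sided statement is what should be proved.
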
 

\begin{proof} 
We begin by noting all instances that are consistent with $\rho$ differ from $x^\star$ on at least the features on which $x^\star$ and $x^\star_\rho$ differ.  Therefore, if $f(x^\star_\rho)\ne f(x^\star)$, {\sc FindMinimal} correctly returns $\{ x^\star_\rho\}$ in Line 1. 

We therefore assume that $f(x^\star_\rho) = f(x^\star)$ and proceed by induction on the height of the subtree of $T$ rooted at $\rho$.  If $\rho$ is a leaf, then $f$ takes the same value on all instances $x'$ that are consistent with $\rho$, and hence $f(x') = f(x^\star_\rho) = f(x^\star)$.  In this case {\sc FindMinimal} correctly returns $\varnothing$ in Line 2. 

For the inductive step, suppose $\rho$ is an internal node.  If $k=0$, the only instance that is of distance $0$ from $x^\star_\rho$ is $x^\star_\rho$ itself, so {\sc FindMinimal} again correctly returns $\varnothing$ in Line 2.  Otherwise, the search space of all instances $x'$ that are consistent with $\rho$ and satisfy $\Delta_f(x',x^\star_\rho)\le k$ can be partitioned into: 
\begin{itemize} 
\item[$\circ$] Those that are consistent with $\rho_{i\leftarrow x_i^\star}$ and satisfy $|\Delta_f(x',x^\star_{\rho_{i\leftarrow x^\star_i}})| \le k$ 
\item[$\circ$] Those that are consistent with $\rho_{i\leftarrow \overline{x}_i^\star}$ and satisfy $|\Delta_f(x',x^\star_{\rho_{i\leftarrow \overline{x}^\star_i}})| \le k-1$,
\end{itemize} 
where $i$ is the feature that is queried at $\rho$ (Line 3a). {\sc FindMinimal} recurses on these two sets in Line 3b, and correctness follows by our induction hypothesis.
\end{proof} 
}

\begin{remark}[Correctness of auxiliary information] 
Our proof of~\Cref{lem:correctness} relies on the correctness of the IDT operations, which only hold if our calls to them use the  ``correct" auxiliary information~$A$.  This will be the case given the way we use {\sc FindMinimal} in our overall algorithm {\sc FindOptimal}. {\sc FindOptimal} calls  {\sc FindMinimal} with $\rho$ being the empty restriction and $A = \varnothing$.  In its recursive execution, {\sc FindMinimal} with auxiliary information $A$ recursively calls itself with $A'$ where $A'$ is the auxiliary information returned by $\textsc{Query}(\rho,A)$ in Line 3a: the correctness of $A'$ follows from the correctness of $A$ and the {\sc Query} operation. 
\end{remark}




\begin{lemma}[Query complexity of {\sc FindMinimal}]
\label{lem:runtime}
	$\textsc{FindMinimal}_f(\rho, x, k, A)$ makes $h^{O(k)}$ calls to the IDT operations and $h^{O(k)}$ additional queries to $f$, where $h$ is the height \violet{of the subtree of $T$ rooted at $\rho$}.  
\end{lemma}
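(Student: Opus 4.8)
The plan is to set up a recurrence on the number of leaves of the recursion tree of $\textsc{FindMinimal}$, parametrized by the distance bound $k$ and the height $h$ of the subtree of $T$ rooted at $\rho$. Let $N(h,k)$ denote the number of recursive calls that $\textsc{FindMinimal}_f(\rho,x^\star,k,A)$ makes (equivalently, the number of nodes in its recursion tree). First I would observe that each invocation does $O(1)$ work in Lines 1--2 together with at most one $\textsc{Query}$ call and $O(1)$ direct queries to $f$ in Line 3a (namely to evaluate $f(x^\star)$ and $f(x^\star_\rho)$ in Line 1); so the total query cost is $O(N(h,k))$ times the per-node cost, and it suffices to bound $N(h,k)$.

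Next I would extract the recurrence from the structure of Line 3b. When $\rho$ is an internal node and $k \ge 1$, $\textsc{FindMinimal}$ makes two recursive calls: one with the same budget $k$ on $\rho_{i\gets x^\star_i}$, and one with budget $k-1$ on $\rho_{i\gets \overline{x}^\star_i}$. Crucially, in the first branch the restriction $\rho_{i\gets x^\star_i}$ is the child of $\rho$ on the computation path of $x^\star$, so the recursion that keeps budget $k$ forms a \emph{path} in $T$ of length at most $h$; at each node along this path, a single ``budget-decrementing'' side call is spawned. Unwinding this: a call with budget $k$ at a node of height $h$ triggers at most $h$ side-calls, each with budget $k-1$ and each rooted at a node whose subtree still has height at most $h$. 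This yields $N(h,k) \le h\cdot N(h,k-1) + O(h)$ (the $O(h)$ counting the nodes on the same-budget path themselves), with base case $N(h,0) = O(h)$ since with $k=0$ the recursion only walks down the $x^\star$-path, returning $\varnothing$ at Line 2 (or $\{x^\star_\rho\}$ at Line 1) at each step. Solving the recurrence gives $N(h,k) = h^{O(k)}$, and hence both the number of IDT-operation calls and the number of additional $f$-queries are $h^{O(k)}$.

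I would present this cleanly by induction on $k$: assuming $N(h,k-1)\le C\, h^{\,c(k-1)}$ for appropriate constants, the bound $N(h,k)\le h\cdot C\,h^{c(k-1)} + O(h) \le C'\, h^{ck}$ follows for suitably chosen $C',c$. One small point to be careful about is that the two recursive calls in Line 3b are on restrictions of size $|\rho|+1$, whose rooted subtrees have height at most $h-1$, so the height parameter is non-increasing along every branch; this is what makes the ``path of length $\le h$'' claim rigorous and also what makes the base-case and the induction close without circularity. The main (minor) obstacle is simply being precise that the same-budget branch literally tracks the root-to-leaf path of $x^\star$ within the subtree, so that the branching factor of the recursion is controlled by $h$ rather than by the full arity or size of $T$; once that observation is in place the calculation is routine.
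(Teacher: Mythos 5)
Your proposal is correct and rests on the same core idea as the paper's proof: each invocation of $\textsc{FindMinimal}$ costs $O(1)$ IDT operations and $O(1)$ direct queries to $f$, so the work reduces to bounding the size of the recursion tree in terms of $h$ and $k$. The difference is in how the recurrence is organized and solved. The paper inducts directly on $h$ using the two-variable recurrence $Q(h,k) \le 2 + Q(h-1,k) + Q(h-1,k-1)$, closed by showing $Q(h,k) \le 2(h+1)^k$. You instead pre-unroll the same-budget chain --- observing that the branch that keeps budget $k$ just tracks the computation path of $x^\star$ for at most $h$ steps, spawning one budget-$(k-1)$ side call per step --- to get the one-variable recurrence $N(h,k) \le h\cdot N(h,k-1) + O(h)$ and induct on $k$. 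This is a legitimate variant of the same count; both yield $h^{O(k)}$, and your version makes the ``path'' structure of the same-budget branch more explicit, which is a nice piece of intuition the paper leaves implicit. Two small corrections to your write-up: (i) when $k=0$, the algorithm does not walk down the $x^\star$-path --- Line 2 (or Line 1) fires immediately, so $N(h,0) = 1$; your $O(h)$ base case is still a valid overestimate, but the narration is off. (ii) The two direct queries $f(x^\star)$, $f(x^\star_\rho)$ are made in Line 1, not Line 3a, and the $\textsc{IsLeaf}$ check in Line 2 is an IDT operation rather than a direct query; neither changes the $O(1)$-per-node accounting, but the bookkeeping should be attributed to the right lines.
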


\begin{proof}
We claim that both quantities are  bounded by $Q(h,k)\le 2(h+1)^k$.
The proof is by induction on $h$. When $h =0$ (i.e.~$\rho$ is a leaf in $T$), {\sc FindMinimal} queries $f$ on $x^\star$ and $x^\star_\rho$, makes one call to {\sc IsLeaf}, and terminates.  When $h\ge 1$, there are at most two queries to $f$ (on $x^\star$ and $x^\star_\rho$), two calls to IDT operations ({\sc IsLeaf} and {\sc Query}), in addition to those made recursively.   We therefore have the following recursive relation: 
\[ Q(h,k) \le 2 + Q(h-1,k) + Q(h-1,k-1). \] 
Applying the induction hypothesis, we can bound $Q(h,k)$ by 
\begin{equation*} 
    Q(h,k) \le 2 + 2h^k + 2h^{k-1} \le 2(h+1)^k. \qedhere
\end{equation*}






\end{proof}

\subsection{Our algorithm for finding optimal counterfactuals} 
\label{sec:find-optimal} 

With the helper algorithm {\sc FindMinimal} in hand, our overall algorithm for finding optimal counterfactuals is simple: 


\begin{figure}[h]
  \captionsetup{width=.9\linewidth}
\begin{tcolorbox}[colback = white,arc=1mm, boxrule=0.25mm]
\vspace{3pt}

$\textsc{FindOptimal}_f(x^\star)$:
\begin{itemize}[align=left]
	\item[\textbf{Given:}] Instance $x^\star$ and queries to $f$. 
	\item[\textbf{Output:}] The set $\mathcal{C}$ of all optimal counterfactuls for $x^\star$. 
\end{itemize}
\begin{enumerate}
    \item Initialize $k \gets 1, \mathcal{C} \gets \varnothing$.
    \item While $\mathcal{C} \ne \varnothing$:
    \begin{enumerate}
        \item Let $\mathcal{C} \gets \textsc{FindMinimal}_f(*^d, x^\star, k, \varnothing)$.
        \item $k \gets k + 1$.
    \end{enumerate}
    \item Return $\mathcal{C}$.
\end{enumerate}
\vspace{1pt} 
\end{tcolorbox}
\caption{Algorithm for finding optimal counterfactuals, using {\sc FindMinimal} as its main subroutine.}
\label{fig:FindOptimal}

\end{figure}

\Cref{thm:main-intro} is a straightforward consequence of~\Cref{lem:efficient_IDTs,lem:correctness,lem:runtime}:  

\begin{theorem}[Formal version of~\Cref{thm:main-intro}]
\label{thm:main} 
Let $f : \zo^n\to\zo$ be a monotone function. Then $\textsc{FindOptimal}_f(x^\star)$, using the IDT operations given in $\Cref{lem:efficient_IDTs}$, returns all optimal counterfactuals for $x^\star$ and makes 
\[ S(f)^{O(\Delta_f(x^\star))} \cdot \log d \]
queries to $f$.
\end{theorem}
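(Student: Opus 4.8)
The plan is to combine the three ingredients cited: the existence of a good IDT (\Cref{lem:efficient_IDTs}), the correctness of \textsc{FindMinimal} (\Cref{lem:correctness}), and its query complexity (\Cref{lem:runtime}). First I would observe that \textsc{FindOptimal} runs \textsc{FindMinimal} with the empty restriction $*^d$, which corresponds to the root of $T$, and with auxiliary information $\varnothing$; by the remark on correctness of auxiliary information, all recursive calls then receive the ``correct'' auxiliary information, so \Cref{lem:correctness} applies at each value of $k$. Hence at iteration $k$ the call $\textsc{FindMinimal}_f(*^d, x^\star, k, \varnothing)$ returns exactly the set of minimal counterfactuals for $x^\star$ of distance at most $k$ (there is no restriction to be consistent with, and $x^\star_{*^d} = x^\star$, so $\Delta_f(x',x^\star_\rho) = |\Delta(x^\star,x')|$).

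Next I would argue termination and correctness of the outer loop. Since every optimal counterfactual is minimal (the observation after the definition of minimal counterfactual), for $k < \Delta_f(x^\star)$ there are no minimal counterfactuals within distance $k$ at all, so \textsc{FindMinimal} returns $\varnothing$ and the loop continues; at $k = \Delta_f(x^\star)$, every minimal counterfactual within distance $k$ must have distance exactly $\Delta_f(x^\star)$ — a strictly closer counterfactual would contradict minimality, and none closer than $\Delta_f(x^\star)$ exists by definition — hence the returned set is precisely the set of all optimal counterfactuals, which is nonempty. So the loop makes exactly $\Delta_f(x^\star)$ iterations and returns the correct set $\mathcal{C}$. (One minor point to get right: the loop guard is ``while $\mathcal C \ne \varnothing$'' in the pseudocode, which should be read as ``while $\mathcal C = \varnothing$'', i.e.\ keep going until a nonempty set is found; I would note this so the argument is unambiguous.)

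For the query bound, let $T$ be the IDT from \Cref{lem:efficient_IDTs}, with depth $h = S(f)^{O(1)}$ and per-operation query cost $q = S(f)^{O(1)} \cdot \log d$. By \Cref{lem:runtime}, the iteration with parameter $k$ makes $h^{O(k)}$ calls to IDT operations plus $h^{O(k)}$ additional direct queries to $f$; each IDT operation costs $q$ queries, so iteration $k$ costs $h^{O(k)} \cdot q$ queries in total. Summing over $k = 1, \dots, \Delta_f(x^\star)$, the dominant term is $k = \Delta_f(x^\star)$, giving a total of
\[
\sum_{k=1}^{\Delta_f(x^\star)} h^{O(k)} \cdot q \;\le\; h^{O(\Delta_f(x^\star))} \cdot q \;=\; \big(S(f)^{O(1)}\big)^{O(\Delta_f(x^\star))} \cdot S(f)^{O(1)}\cdot \log d \;=\; S(f)^{O(\Delta_f(x^\star))}\cdot \log d,
\]
using $\Delta_f(x^\star) \ge 1$ to absorb the lone $S(f)^{O(1)}$ factor and collapsing nested $O(\cdot)$'s in the exponent. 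A degenerate case worth a sentence: if $f$ is constant there is no counterfactual, but then $S(f) = 0$ and we may treat $\Delta_f(x^\star)$ as undefined/$\infty$; for non-constant monotone $f$ a counterfactual always exists so the loop halts. I do not anticipate a serious obstacle here — the work is all in the two lemmas already proved; the only things to be careful about are (i) correctly threading the auxiliary information so \Cref{lem:correctness} is applicable, (ii) the ``minimal within distance $\Delta_f(x^\star)$ $\Rightarrow$ optimal'' step, and (iii) the w.h.p.\ bookkeeping — since the IDT guarantees hold w.h.p.\ per operation and there are $\mathrm{poly}(d) \cdot S(f)^{O(\Delta_f(x^\star))}$ operations, a union bound keeps the overall failure probability $1/\mathrm{poly}(d)$, possibly after inflating $q$ by a $\log$ factor which is absorbed into $S(f)^{O(1)}\log d$.
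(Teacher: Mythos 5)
Your proposal is correct and follows essentially the same route as the paper's proof: invoke Lemma \ref{lem:correctness} to get that iteration $k$ returns all minimal counterfactuals within distance $k$, observe that the loop terminates exactly at $k = \Delta_f(x^\star)$ (you spell out the ``minimal within distance $\Delta_f(x^\star)$ $\Rightarrow$ optimal'' step, which the paper leaves implicit), and then sum the Lemma \ref{lem:runtime} bound over iterations with $h$ and the per-operation cost both instantiated via Lemma \ref{lem:efficient_IDTs}. You also correctly flag that the loop guard in the pseudocode should read ``while $\mathcal{C} = \varnothing$'', which is indeed a typo in Figure \ref{fig:FindOptimal}.
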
 

\begin{proof}
The correctness of {\sc FindOptimal} follows from that of {\sc FindMinimal}: {\sc FindOptimal} calls $\textsc{FindMinimal}_f(*^d, x^\star, k,\varnothing)$ for $k=1,2,3,\ldots$ until a call returns a nonempty set.  This happens exactly when $k$ reaches $\Delta_f(x^\star)$. 

It remains to analyze the query complexity of {\sc FindOptimal}.  
By \Cref{lem:efficient_IDTs}, the height of $T$ is at most $S(f)^{O(1)}$. Therefore by \Cref{lem:runtime} we have that $\textsc{FindMinimal}_f(*^d,x^\star,k,\emptyset)$ uses at most $S(f)^{O(k)}$  IDT operations and additional queries to $f$.  By~\Cref{lem:efficient_IDTs}, each IDT operation can be supported with $S(f)^{O(1)} \cdot \log d$ queries to $f$.  The overall query complexity of  {\sc FindOptimal} is therefore upper bounded by
\[ \sum_{k=1}^{\Delta_f(x^\star)}  S(f)^{O(k)} \cdot S(f)^{O(1)}\cdot \log d  \le S(f)^{O(\Delta_f(x))} \cdot \log d. \qedhere \] 
\end{proof}

\section{Building efficient implicit decision trees: Proof of~\texorpdfstring{\Cref{lem:efficient_IDTs}}{Lemma 3.1}}
\label{sec:proof-of-lemma} 
\label{sec:build-idt}
In this section, we prove \Cref{lem:efficient_IDTs}. Our implementation of IDTs will be based on known algorithms for finding \textit{certificates}. These certificates will be stored in the auxiliary information of the IDT and will be used to select which features to query in the decision tree. 

\subsection{Certificate complexity and query-efficient certificate finding}
At a high level, a certificate of $f$ on $x^\star\in\zo^d$ is a set of features $W\sse [d]$ that ``witness'' $f$'s value on $x^\star$ in the sense that $f(y) = f(x^\star)$ for all $y\in\zo^d$ that agree with $x^\star$ on the coordinates in $W$.

\begin{definition}[Certificate complexity]
    \label{def:cert-complexity}
    For a model $f : \zo^d \to \zo$ and an input $x^\star$, the {\sl complexity of certifying $f$'s value on $x^\star$}, $C(f,x^\star)$, is the quantity: 
    $$
    \min_{W\sse [d]}\big\{ |W| \colon \text{$f(y)= f(x^\star)$ for all $y$ s.t.~$y_W = x^\star_W$}\big\}.
    $$
    The {\sl certificate complexity of $f$} is the quantity $\ds C(f)\coloneqq \max_{x\in \zo^d} \{ C(f,x)\}$.
\end{definition}

\paragraph{Example.}{ Because the set $W=[d]$ is always a certificate of $f$ on $x^\star$, we typically want \textit{small} certificates satisfying $|W|\le C(f)$. For a specific example, consider the function $f:\zo^5\to\zo$ defined by the decision tree in \Cref{fig:IDT_query_example}. Then $W=\{1,3,5\}$ is a certificate of $f$ on $x^\star=10000$ since $f$ outputs $0$ on any input $y$ satisfying $y_1=1, y_3=0,$ and $y_5=0$. Indeed, for an arbitrary instance $x^\star$ the indices queried along the root-to-leaf path for $x^\star$ constitute a certificate of $f$ on $x^\star$.}

In general, we say $W\sse [d]$ is a certificate of $f$ if there exists some $x\in\zo^d$ such that $W$ is a certificate of $f$ on $x$. Our algorithm relies on a query-efficient procedure {\sc Cert}($f$) which takes a monotone $f$ and returns the features in small certificate of $f$. The features in this certificate will be exactly the features we query in our IDT for $f$. ``Small certificate'' here means having size at most $C(f)^{O(1)}$. For monotone functions, this size bound is equivalently $S(f)^{O(1)}$ using the fact that certificate complexity equals sensitivity for monotone models $f$:

\begin{fact}[\cite{Nis89}]
    \label{fact:sensitivity_equals_cc}
    For a monotone model $f:\zo^d\to\zo$, $S(f)=C(f)$.
\end{fact}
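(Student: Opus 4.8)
The plan is to establish the two inequalities $S(f)\le C(f)$ and $C(f)\le S(f)$ separately. The first holds for an arbitrary $f:\zo^d\to\zo$; only the second uses monotonicity.

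\textbf{The easy direction $S(f)\le C(f)$.} Fix any $x^\star$ and let $W\sse[d]$ be a minimum-size certificate of $f$ at $x^\star$, so $|W|=C(f,x^\star)$. I claim $W$ must contain every sensitive coordinate of $x^\star$: if $i\notin W$ but $f(x^\star)\ne f(x^\star_{i\leftarrow b})$ for some $b$, then $x^\star$ and $x^\star_{i\leftarrow b}$ agree on all of $W$, so the certificate property forces $f(x^\star)=f(x^\star_{i\leftarrow b})$, a contradiction. Hence $S_f(x^\star)\le|W|=C(f,x^\star)$, and taking the maximum over $x^\star$ gives $S(f)\le C(f)$.

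\textbf{The monotone direction $C(f)\le S(f)$.} Fix $x^\star$; I will exhibit a certificate of $f$ at $x^\star$ of size at most $S(f)$. Suppose first $f(x^\star)=1$. Starting from $y\gets x^\star$, repeatedly flip a $1$-coordinate of $y$ to $0$ whenever doing so preserves the value $1$; this terminates (the number of $1$'s strictly decreases) at some $y\le x^\star$ with $f(y)=1$ that is \emph{minimal}, meaning $f(y_{i\leftarrow 0})=0$ for every $i$ with $y_i=1$. Let $W=\{i:y_i=1\}$. Since $y\le x^\star$, every $i\in W$ has $x^\star_i\ge y_i=1$, so $y$ and $x^\star$ agree on $W$. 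For any $z$ with $z_W=x^\star_W$ we have $z\ge y$ (they agree and equal $1$ on $W$, and $y_i=0$ off $W$), so monotonicity gives $f(z)\ge f(y)=1=f(x^\star)$; thus $W$ is a certificate of $f$ at $x^\star$. By minimality of $y$, each coordinate of $W$ is sensitive at $y$, so $|W|\le S_f(y)\le S(f)$. The case $f(x^\star)=0$ is dual: walk \emph{up} from $x^\star$ to a maximal $y\ge x^\star$ with $f(y)=0$, take $W=\{i:y_i=0\}$, note $y$ and $x^\star$ agree on $W$, and argue symmetrically using that $z_W=x^\star_W$ implies $z\le y$ and hence $f(z)\le f(y)=0$. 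In all cases $C(f,x^\star)\le|W|\le S(f)$, so $C(f)\le S(f)$.

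Combining the two inequalities gives $S(f)=C(f)$. I do not expect a genuine obstacle here: the only points requiring a moment's care are verifying that the minimal (resp.\ maximal) witness $y$ agrees with $x^\star$ on the claimed certificate coordinates — immediate from $y\le x^\star$ (resp.\ $y\ge x^\star$) — and that minimality (resp.\ maximality) is exactly what makes every coordinate of $W$ sensitive at $y$, which is immediate from the definitions.
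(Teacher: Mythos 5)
The paper does not prove this fact; it simply cites it as a known result of Nisan~\cite{Nis89}. Your proof is a correct, self-contained derivation of that result, and it is the standard argument: $S(f)\le C(f)$ because any certificate must contain every sensitive coordinate (this direction holds for arbitrary $f$), and $C(f)\le S(f)$ for monotone $f$ by exhibiting, for a $1$-input $x^\star$, a minimal $1$-input $y\le x^\star$, observing that its support $W$ is simultaneously a certificate for $x^\star$ (by monotonicity, any $z$ agreeing with $x^\star$ on $W$ dominates $y$) and entirely sensitive at $y$, with the $0$-input case handled dually. The one point to be slightly careful about -- that the terminal $y$ of the greedy descent truly has every $1$-coordinate sensitive -- follows directly from the termination condition (no single $1\to0$ flip preserves $f=1$), so no hidden gap there. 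Since the paper offers no proof to compare against, I can only confirm that your argument is sound and matches the textbook proof of this equality.
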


 The authors of \cite{BKLT22} recently developed a query-efficient implementation of \textsc{Cert}$(\cdot)$ for monotone $f$. We restate their result here in terms of $S(f)$ using \Cref{fact:sensitivity_equals_cc}.

\begin{theorem}[Theorem 5 of~\cite{BKLT22}]
    \label{thm:efficient_certification}
    Let $f:\zo^d\to\zo$ be monotone. Then there is an algorithm that computes a certificate of $f$ of size $S(f)^{O(1)}$ with high probability using $S(f)^{O(1)}\cdot\log d$ queries to $f$. 
\end{theorem}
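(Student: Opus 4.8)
The plan is to give a deterministic procedure that, using only queries to a monotone $f:\zo^d\to\zo$, extracts a subset-minimal minterm of $f$ (equivalently, a subset-minimal $1$-certificate of $f$ at the all-ones input), and to argue that such a minterm is itself a certificate of $f$ in the sense of \Cref{def:cert-complexity}, that it has size at most $S(f)$, and that it is found with $O(S(f)\cdot\log d)$ queries; since the procedure is deterministic, the ``with high probability'' clause is then automatic. First query $f(1^d)$. By monotonicity, if $f(1^d)=0$ then $f\equiv 0$ and the empty set certifies $f$ at every input, so return $\varnothing$. Henceforth assume $f(1^d)=1$, and for $W\subseteq[d]$ write $\mathbf 1_W\in\zo^d$ for its indicator vector.

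To extract the minterm we use a recursive deletion-based search in the spirit of minimal-unsatisfiable-core (``QuickXplain''-style) algorithms. Define $\textsc{Min}(B,A)$, for disjoint $B,A\subseteq[d]$ with precondition $f(\mathbf 1_{B\cup A})=1$, to return a set $M\subseteq A$ minimal subject to $f(\mathbf 1_{B\cup M})=1$:
\begin{itemize}
\item if $f(\mathbf 1_B)=1$, return $\varnothing$; if $|A|=1$, return $A$;
\item otherwise split $A=A_1\sqcup A_2$ into near-halves; if $f(\mathbf 1_{B\cup A_1})=1$ return $\textsc{Min}(B,A_1)$, else if $f(\mathbf 1_{B\cup A_2})=1$ return $\textsc{Min}(B,A_2)$;
\item otherwise set $M_1\gets\textsc{Min}(B\cup A_2,A_1)$ and then $M_2\gets\textsc{Min}(B\cup M_1,A_2)$, and return $M_1\cup M_2$.
\end{itemize}
Running $\textsc{Min}(\varnothing,[d])$, whose precondition holds since $f(1^d)=1$, returns a set $M$ minimal subject to $f(\mathbf 1_M)=1$, i.e.\ a subset-minimal minterm of $f$. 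Correctness is a standard induction: one checks that every recursive call's precondition is met, and that in the last case $M_1,M_2$ are disjoint and nonempty (using $f(\mathbf 1_{B\cup A_2})=0$ to get $M_1\neq\varnothing$, and $f(\mathbf 1_{B\cup M_1})\le f(\mathbf 1_{B\cup A_1})=0$ to get $M_2\neq\varnothing$) and that $M_1\cup M_2$ is minimal for $(B,A)$.

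It remains to verify the three claims. \emph{($M$ is a certificate.)} We have $f(\mathbf 1_M)=1$ by construction, and for any $y$ with $y_i=1$ for all $i\in M$ we have $y\ge\mathbf 1_M$ coordinatewise, so monotonicity gives $f(y)\ge f(\mathbf 1_M)=1$; hence $M$ certifies $f$ at the input $\mathbf 1_M$. (Monotonicity is essential here --- for general $f$ a minimal minterm need not be a certificate.) \emph{(Size.)} For each $i\in M$, minimality gives $f(\mathbf 1_{M\setminus\{i\}})=0\neq 1=f(\mathbf 1_M)$, so coordinate $i$ is sensitive for $f$ at $\mathbf 1_M$; hence $|M|\le S_f(\mathbf 1_M)\le S(f)$, which by \Cref{fact:sensitivity_equals_cc} is within a factor $1$ of the optimal certificate size $C(f)=S(f)$. \emph{(Queries.)} Let $T(n,k)$ bound the queries of $\textsc{Min}(B,A)$ when $|A|=n$ and the returned set has size $k$; since each invocation makes at most $3$ queries before recursing, $T(n,0),T(1,k)=O(1)$ and
\[ T(n,k)\ \le\ 3+\max\Big\{T(\lceil n/2\rceil,k),\ \max_{\substack{k_1+k_2=k\\ k_1,k_2\ge 1}}\big(T(\lceil n/2\rceil,k_1)+T(\lceil n/2\rceil,k_2)\big)\Big\}, \]
which solves to $T(n,k)=O\!\big(k(1+\log(n/k))\big)$ --- the recursion tree is a binary tree with $k$ ``significant'' leaves linked by $1$-child chains of length $O(\log n)$. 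With $n=d$ and $k=|M|\le S(f)$ this is $O(S(f)\log d)$, and since $S(f)=S(f)^{O(1)}$ and $S(f)\log d=S(f)^{O(1)}\log d$, the theorem follows.

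The main obstacle is controlling the \emph{output size}, and through it the query count: a prefix-minimal or greedily obtained minterm can have size up to $d$, which would make both the returned certificate and the number of deletion tests useless. The recursive structure --- commit an entire half, minimize the other half against it, then re-minimize the first half against the result --- is precisely what forces the output to be \emph{subset}-minimal, and subset-minimality is what caps its size by $S(f)$ (via the sensitivity argument) and caps the number of ``commit'' steps, hence the number of queries. Carefully maintaining the precondition and minimality invariants through this recursion, and solving the two-term recurrence, is where the real work lies; the reduction to non-constant $f$ and the certificate and size arguments are immediate from monotonicity.
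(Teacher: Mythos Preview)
The paper does not prove this statement; it is quoted as Theorem~5 of~\cite{BKLT22} and used as a black box. So there is no ``paper's own proof'' to compare against, and your proposal should be read as a self-contained argument.

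Your argument is correct. The key steps all check out: (i) a subset-minimal $M$ with $f(\mathbf 1_M)=1$ is a $1$-certificate at $\mathbf 1_M$ by monotonicity, matching the paper's definition of ``certificate of $f$''; (ii) every $i\in M$ is sensitive at $\mathbf 1_M$, so $|M|\le S(f)$; (iii) the QuickXplain-style recursion maintains the stated precondition, and the minimality of $M_1\cup M_2$ for $(B,A)$ follows since for $i\in M_1$ one uses $M_2\subseteq A_2$ together with monotonicity to push $f(\mathbf 1_{B\cup A_2\cup(M_1\setminus\{i\})})=0$ down to $f(\mathbf 1_{B\cup M_2\cup(M_1\setminus\{i\})})=0$; (iv) the recurrence indeed solves to $O(k\log d)$ queries with $k=|M|\le S(f)$. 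Your bounds are in fact sharper than the theorem as stated: certificate size $\le S(f)$ rather than $S(f)^{O(1)}$, and $O(S(f)\log d)$ queries rather than $S(f)^{O(1)}\log d$, and the procedure is deterministic.

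Two minor remarks. First, your algorithm always returns a $1$-certificate (or $\varnothing$ when $f\equiv 0$); this is still compatible with how the paper uses \textsc{Cert} in \Cref{lem:restricting_to_constant}, since repeatedly restricting by $1$-certificates drives $C_0$ down to $0$ in at most $S(f)$ rounds, yielding the same $S(f)^{O(1)}$ depth bound. Second, in the branching case the second call has $|A_2|=\lfloor n/2\rfloor$, not $\lceil n/2\rceil$; this does not affect the asymptotics.
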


\subsection{Overall structure of \texorpdfstring{$T$}{T}}

\Cref{fig:IDT_query_example} depicts the overall structure of an IDT. Conceptually, the tree can be divided into subtree ``blocks''. For a subtree rooted at $\rho$, this block is the complete binary tree that exhaustively queries all features in $\textsc{Cert}(f_\rho)$, a small certificate for $f_\rho$. One call to $\textsc{Cert}(f_\rho)$ yields a set of features to query in the tree and so we store this set as auxiliary information until we need to call $\textsc{Cert}(\cdot)$ again. The overall depth of the tree is the number of consecutive blocks in a root-to-leaf path times the depth of each block. The depth of each block is the size of the certificate returned by $\textsc{Cert}(\cdot)$ and so \Cref{thm:efficient_certification} ensures that this depth is small. In the next section, we give the implementation details of the IDT operations which yield this IDT structure. Then, we show how to bound the number of consecutive blocks which provides a bound on the overall depth of the IDT.

\begin{figure}[h!]
    \centering
    \begin{tikzpicture}[tips=proper]
        \node[isosceles triangle,
            draw,
            isosceles triangle apex angle=60,
            rotate=90,
            minimum size=3cm] (T1) at (0,0){};
        
        \node[isosceles triangle,
            draw,
            isosceles triangle apex angle=60,
            rotate=90,
            minimum size=3cm,
            anchor=east] (T2) at (T1.220){};
        \node[isosceles triangle,
            draw,
            isosceles triangle apex angle=60,
            rotate=90,
            minimum size=3cm,
            anchor=east] (T3) at (T2.150){};
        \node[isosceles triangle,
            draw,
            isosceles triangle apex angle=60,
            rotate=90,
            minimum size=3cm,
            anchor=east] (T4) at (T3.200){};
        
        \node[isosceles triangle,
            draw,
            dashed,
            isosceles triangle apex angle=60,
            rotate=90,
            minimum size=12.064cm,
            anchor=east] (T5) at (T1.east){};
        
        \draw[black,dotted] (T1.east) .. controls (T1.140) and (T1.280) .. (T1.220) node[fill=white,pos=0.4] (N1) {$\rho_1$};
        \draw[black,dotted] (T2.east) .. controls (T2.357) and (T2.100) .. (T2.150) node[fill=white,pos=0.7] (N2) {$\rho_2$};
        \draw[black,dotted] (T3.east) .. controls (T3.330) and (T3.60) .. (T3.200) node[fill=white,pos=0.7] (N3) {$\rho_3$};
        \draw[black,dotted] (T4.east) .. controls (T4.340) and (T4.30) .. (T4.west) node[fill=white,pos=0.8] (N4) {$\rho_4$};
        
        \draw[color=black] (T4.west) node [below,fill=white] {$x^\star$};
        \node[draw,circle,fill=black,inner sep=1pt] (x) at (T4.west) {};
        
        \draw[color=black] ([xshift=2.0cm,yshift=-0.5cm]T3.west) node [right,fill=white] {\small Restriction corresponding to path segment};
        \draw[stealth-,gray] (N4) edge [out=0, in=180]  ([xshift=2.0cm,yshift=-0.5cm]T3.west);

        \draw[{Stealth[scale=1]}-{Stealth[scale=1]}] ([xshift=3cm]T1.east) to node[midway,fill=white!30,scale=1] {depth $C(f)^{O(1)}$} ([xshift=3cm]T1.west);
        
        
        \draw [gray,decorate,decoration={brace,mirror,raise=3pt,amplitude=4pt}] ([yshift=3cm]T1.left corner) -- (T1.left corner) node [black,pos=0.5,xshift=-0.3cm,left] {\small Certificate of $f$};
        \draw [gray,decorate,decoration={brace,mirror,raise=3pt,amplitude=4pt}] ([yshift=0cm]T1.left corner) -- ([yshift=-3cm]T1.left corner) node [black,fill=white,pos=0.5,xshift=-0.3cm,left] {\small Certificate of $f_{\rho_1}$};
        \draw [gray,decorate,decoration={brace,mirror,raise=3pt,amplitude=4pt}] ([yshift=-3cm]T1.left corner) -- ([yshift=-6cm]T1.left corner) node [black,fill=white,pos=0.5,xshift=-0.3cm,left] {\small Certificate of $f_{\rho_1,\rho_2}$};
        \draw [gray,decorate,decoration={brace,mirror,raise=3pt,amplitude=4pt}] ([yshift=-6cm]T1.left corner) -- ([yshift=-9cm]T1.left corner) node [black,fill=white,pos=0.5,xshift=-0.3cm,left] {\small Certificate of $f_{\rho_1,\rho_2,\rho_3}$};
    \end{tikzpicture}

    \caption{Tracing the root-to-leaf path of an input $x^\star\in\zo^d$ through an implicit decision tree (IDT) separated into blocks. The dashed triangle outline represents the entire tree. The IDT operations allow one to construct the dotted path without constructing the entire tree. Each triangular block corresponds to the subtree exhaustively querying all the features in a certificate of a restriction of $f$. The depth of each block is $C(f)^{O(1)}$, the number of features in a certificate of $f$. The total number of blocks intersecting any root-to-leaf path is at most $C(f)^{O(1)}$ and so the overall depth of the IDT is at most $C(f)^{O(1)}$.}
    \label{fig:path_thru_idt}
\end{figure}
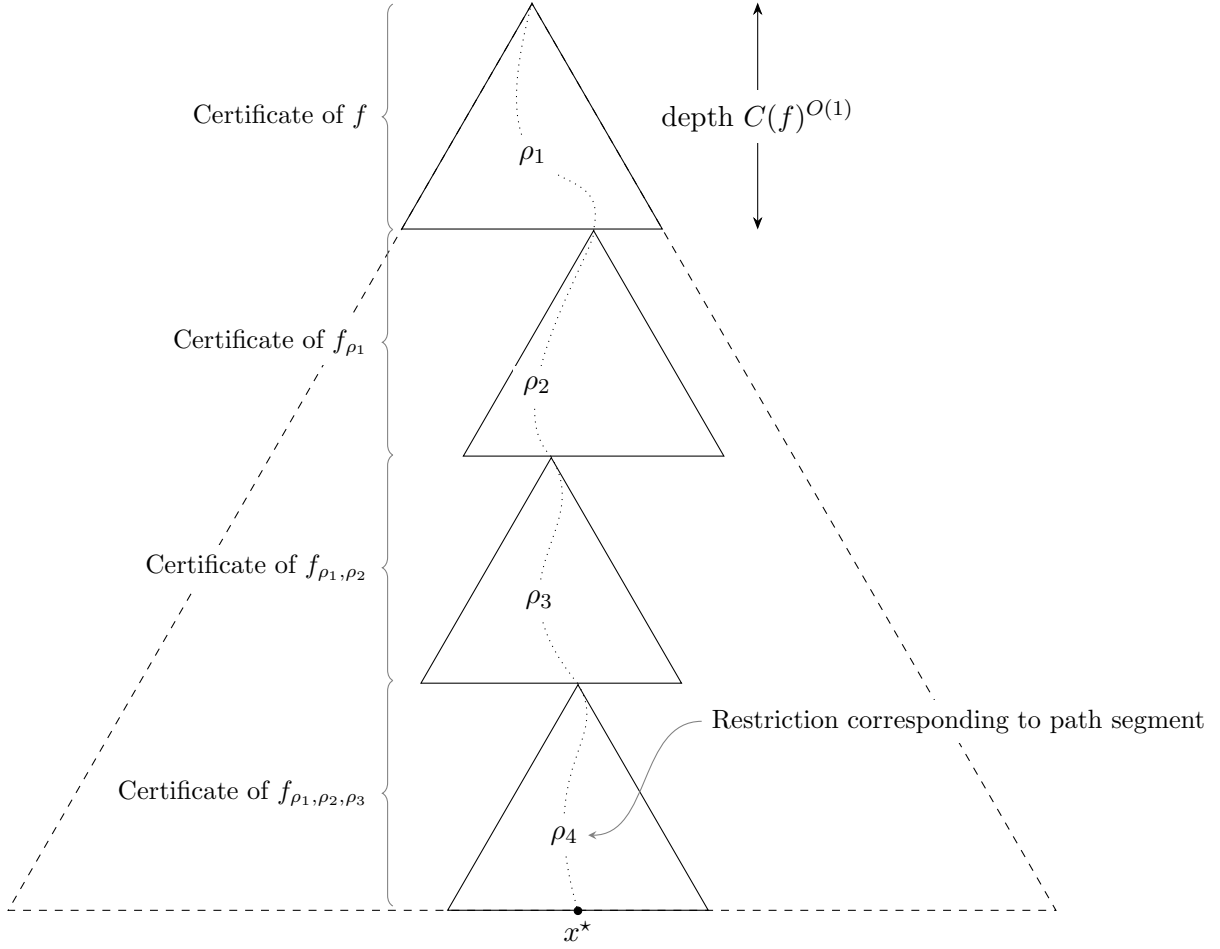

\subsection{Query-efficient implementation of IDT operations for navigating \texorpdfstring{$T$}{T}} 

\paragraph{{\sc Query}$_f(\cdot,\cdot)$.}{We assume there is some ordering on features $[d]$ and hence on elements of a subset $A\subseteq [d]$. Consider the following implementation of {\sc Query}$_f$.

{\sc Query}$_f(\rho, A)$:
\begin{enumerate}
    \item if $A\subseteq \text{Dom}(\rho)$ then let $A'=\textsc{Cert}(f_\rho)$ and return the next feature of $A'$;
    \item otherwise, return the next feature in $A\setminus \text{Dom}(\rho)$ and set $A'=A$. 
\end{enumerate}

This procedure stores a list of features in the auxiliary information, $A$, which is then queried iteratively until all features have been exhausted after which a new set of features is computed using {\sc Cert}. The set $A$ specifies a complete binary tree of depth $|A|$ -- the tree which exhaustively queries all of the features in $A$. In this way, the entire IDT is a collection of complete subtree blocks each of which originates from a certificate output by {\sc Cert}. \Cref{fig:path_thru_idt} illustrates the computation path of a particular input through an IDT using auxiliary information to store subtree blocks.
}

\paragraph{{\sc IsLeaf}$_f(\cdot,\cdot)$.}{
When $f$ is monotone, one can check whether it's the constant function by querying $f$ on the all $0$s input and the all $1$s input. Therefore, to check whether $\rho$ is a leaf, we can query $f_\rho$ on these two inputs and output Yes if and only if $f_\rho$ is determined to be the constant function. 
}

\subsection{Bounding the depth of \texorpdfstring{$T$}{T}} 

We also use the following lemma (implicit in Theorem 6 of~\cite{BKLT22}) which states that $f$ quickly becomes the constant function after being repeatedly restricted by a certificate. Alternatively, consecutive blocks in an IDT quickly terminate in a leaf node. More specifically, suppose $A=\textsc{Cert}(f)$ is a certificate for $f$. So, there is some restriction $\rho:[d]\rightharpoonup \zo$ with $\text{Dom}(\rho)=A$ such that $f_\rho$ is a constant function. Equivalently, there is some way of setting each feature in $A$ to a value in $\zo$ such that $f_\rho$ is the constant function. In general, for an \textit{arbitrary} restriction $\rho:[d]\rightharpoonup \zo$ with $\text{Dom}(\rho)=A$ (an arbitrary setting of each feature in $A$ to $\zo$), $f_\rho$ may be nonconstant. However, one can show $f_\rho$ is ``closer'' to a constant than $f$ in the sense that $f_\rho$ has smaller certificate complexity. The following lemma captures this behavior and states more specifically that $f$ becomes constant after $O(S(f))$ many such restrictions. For more details, see the proof of the lemma in \Cref{appendix:restrict_f_to_constant}. 

\begin{lemma}[Implicit in Theorem 6 of~\cite{BKLT22}]
    \label{lem:restricting_to_constant}
    Let $f:\zo^d\to\zo$ be monotone. Let $\rho_1,\ldots,\rho_\ell$ be a series of restrictions and $A_1,\ldots,A_\ell\subseteq [d]$ be disjoint such that for all $i\in [\ell]$ $\text{Dom}(\rho_i)=A_1\cup\cdots\cup A_i$ and each $A_{i+1}$ is a certificate for $f_{\rho_i}$. Then $\ell=2\cdot S(f)$ suffices to ensure that $f_{\rho_\ell}$ is constant.
\end{lemma}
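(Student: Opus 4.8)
The plan is to track how the certificate complexity of $f_{\rho_i}$ evolves, separating it into a ``$0$-part'' and a ``$1$-part.'' For a monotone $g:\zo^m\to\zo$, let $C_0(g)$ be the size of the largest maxterm of $g$ (the largest minimal $0$-certificate) and $C_1(g)$ the size of the largest minterm, with both set to $0$ when $g$ is constant. I would first record three routine facts: (i) $g$ is constant iff $C_0(g)=C_1(g)=0$, since a non-constant monotone function has a nonempty minterm and a nonempty maxterm; (ii) restricting a function never increases $C_0$ or $C_1$; and (iii) $C_0(g),C_1(g)\le C(g)=S(g)$ by \Cref{fact:sensitivity_equals_cc}, so in particular $C_0(f),C_1(f)\le S(f)$, and since $S(f_{\rho_i})\le S(f)$ the same bound holds for every $f_{\rho_i}$.

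The heart of the argument is the following claim: if $g$ is non-constant and monotone, $m$ is a minterm of $g$, and $\sigma$ is \emph{any} restriction with domain $m$, then either $g_\sigma\equiv 1$ or $C_0(g_\sigma)\le C_0(g)-1$; and symmetrically, restricting by an arbitrary setting of a maxterm either makes $g$ constantly $0$ or strictly decreases $C_1$. To prove it, let $S_0=\{j\in m:\sigma_j=0\}$; if $S_0=\varnothing$ then $\sigma$ sets all of $m$ to $1$ and $g_\sigma\equiv 1$, so assume $S_0\ne\varnothing$. Let $M'$ be an arbitrary maxterm of $g_\sigma$, so $M'\subseteq[d]\setminus m$. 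A short monotonicity argument shows that setting all of $M'\cup S_0$ to $0$ already forces $g=0$, so $M'\cup S_0$ contains a maxterm $M$ of $g$. Every maxterm of $g$ must intersect every minterm of $g$ (otherwise one could force $g=0$ and $g=1$ on the same input), so $M$ meets $m$; since $M\subseteq M'\cup S_0$ and $M'$ is disjoint from $m$, this gives $\varnothing\ne M\cap m\subseteq M\cap S_0$. Finally, because $M\cap S_0$ is already set to $0$ by $\sigma$, the set $M\cap M'$ is itself a $0$-certificate of $g_\sigma$; by minimality of the maxterm $M'$ this forces $M'\subseteq M$, and combined with $M\cap S_0\ne\varnothing$ we get $|M|\ge|M'|+1$, hence $C_0(g)\ge|M|\ge|M'|+1$. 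As $M'$ was arbitrary, $C_0(g_\sigma)\le C_0(g)-1$.

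With the claim in hand the lemma follows by counting. First, replacing each certificate returned by \textsc{Cert} with a minimal sub-certificate it contains is harmless: a minimal certificate is either a minterm or a maxterm of $f_{\rho_i}$, and restricting the smaller set only makes $C_0,C_1$ drop at least as fast, which is all we need for an upper bound on $\ell$. Call step $i$ a ``$1$-step'' or a ``$0$-step'' according to whether $A_{i+1}$ contains a minterm or a maxterm of $f_{\rho_i}$. If some $f_{\rho_i}$ is constant we are done, since every later $f_{\rho_j}$ is a restriction of a constant function; so suppose for contradiction that $f_{\rho_{2S(f)}}$ is non-constant, hence all of $f_{\rho_0},\dots,f_{\rho_{2S(f)}}$ are non-constant and the claim applies at each step. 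Each $1$-step decreases $C_0$ by at least $1$ and each $0$-step decreases $C_1$ by at least $1$, while the other quantity never increases. Among the $2S(f)$ steps at least $S(f)$ are of the same type, say $1$-steps, and then $C_0(f_{\rho_{2S(f)}})\le C_0(f)-S(f)\le 0$, forcing $C_0(f_{\rho_{2S(f)}})=0$, which contradicts non-constancy (a non-constant monotone function has a nonempty maxterm). The case of $\ge S(f)$ many $0$-steps is symmetric, so $f_{\rho_{2S(f)}}$ is constant.

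The step I expect to be the main obstacle is the ``heart'' claim, and within it the observation that $M\cap M'$ is a $0$-certificate of $g_\sigma$ and hence $M'\subseteq M$: this is exactly what upgrades the useless bound $|M|\le|M'|+|S_0|$ into the needed $|M|\ge|M'|+1$, and it relies delicately on the fact that $M\cap S_0$ is pinned to $0$ by $\sigma$. I would also want to be careful in spelling out the monotonicity manipulations (that $M'\cup S_0$ forces $g=0$, and that every maxterm meets every minterm) and in justifying that passing to minimal sub-certificates does not increase the number of rounds needed.
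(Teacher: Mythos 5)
Your proof is correct, and the overall strategy coincides with the paper's: track $C_0 + C_1$, show that each certificate-restriction step decreases one of them by at least $1$ unless the function has already become constant, and then count using $C_0(f)+C_1(f)\le 2C(f)=2S(f)$. The underlying combinatorial fact is also the same (every $1$-certificate must intersect every $0$-certificate). Where you diverge is in the execution of the key decrement step, and the paper's version is noticeably lighter. Rather than lifting a maxterm $M'$ of $g_\sigma$ to a strictly larger maxterm $M$ of $g$ (which requires the $M\cap M'$/minimality argument and the $M'\cup S_0$ monotonicity step you correctly flag as delicate), the paper works in the projection direction: take any $x\in f_\rho^{-1}(0)$, extend it by $\rho$ to $x'$, let $S_0$ be a $0$-certificate of $f$ on $x'$ of size at most $C_0(f)$, and observe that $S_0\setminus\mathrm{Dom}(\rho)$ is a $0$-certificate of $f_\rho$ on $x$ which is strictly smaller than $S_0$ precisely because $S_0$ must intersect the $1$-certificate $\mathrm{Dom}(\rho)$. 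This yields $C_0(f_\rho)\le C_0(f)-1$ in two lines and sidesteps minterm/maxterm structure entirely. One small point worth making explicit in your write-up: when you ``replace $A_{i+1}$ by a minimal sub-certificate,'' what you really mean (and what is needed) is the two-step decomposition $f_{\rho_i}\to f_{\rho_i,m_{i+1}}\to f_{\rho_{i+1}}$, applying your heart claim to the first step and the restriction-monotonicity of $C_0,C_1$ to the second; as stated, ``replacing'' the certificate sounds like you're changing the sequence of restrictions, which isn't what the lemma is about. Finally, your pigeonhole count (at least $S(f)$ steps of one type) and the paper's direct decrement of $C_0+C_1$ both give the bound $2S(f)$, so that choice is purely stylistic.
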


\subsection{Putting it all together: Proof of \texorpdfstring{\Cref{lem:efficient_IDTs}}{Lemma 3.1}}

We prove that our implementations of {\sc Query}$_f$ and {\sc IsLeaf}$_f$ suffice to prove the lemma statement. First, we show that the implementations require at most $S(f)^{O(1)}\cdot\log d$ queries to $f$, then we show the depth of the IDT is at most $S(f)^{O(1)}$. 

The procedure {\sc IsLeaf} uses $O(1)$ queries to $f$. A call to {\sc Query}$(\rho,A)$ returns either a feature in $A$ or queries {\sc Cert}$(f)$. \Cref{thm:efficient_certification} shows that {\sc Cert}$(f)$ requires at most $S(f)^{O(1)}\cdot\log d$ to $f$ and hence {\sc Query}$(\rho,A)$ requires at most $S(f)^{O(1)}\cdot\log d$ queries to $f$.

It remains to show that the depth of the IDT is at most $S(f)^{O(1)}$. Let $\rho$ be an arbitrary leaf of the IDT. By construction, $\text{Dom}(\rho)$ can be partitioned into blocks of features $A_1,\ldots,A_\ell$ where each $A_i$ is a certificate of $f$ restricted on the features in $A_1\cup\cdots\cup A_{i-1}$. \Cref{lem:restricting_to_constant} ensures that $f$ becomes constant after being restricted by $\ell=S(f)^{O(1)}$ many such certificates. \hfill
\qed

\section{Lower bounds}
\label{sec:lb}

\subsection{Optimality of \texorpdfstring{\Cref{thm:main}}{Theorem 6}}

We prove the following (slightly more general and formal) version of \Cref{thm:lb-intro}.
\begin{theorem}
    \label{thm:lb-monotone}
    Fix any constant $c < 1$ and success probability $\eps > 0$. For any algorithm $\mcA$, and $S,d, \Delta \in \N$ such that $S \leq d$ and $\Delta \leq S^c$, there is a monotone model $f: \zo^d \to \zo$ with $S(f) \leq S$ and instance $x^\star$ with $\Delta_f(x^\star) = \Delta$ such that $\mcA$ must make
    \begin{equation*}
        q = S^{\Omega(\Delta)} + \Omega(\log d)
    \end{equation*}
    queries to find a single optimal counterfactual for $x^\star$ with success probability $\eps$.
\end{theorem}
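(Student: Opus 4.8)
The plan is to prove the two terms of the lower bound $q = S^{\Omega(\Delta)} + \Omega(\log d)$ via two essentially independent hard instance constructions, and then combine them on disjoint blocks of coordinates (so the adversary can "splice" whichever construction forces the harder bound for the given parameter regime).

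First I would handle the $\Omega(\log d)$ term, which should be the easier of the two. The idea is a standard "needle in a haystack" / hidden-coordinate argument: take $x^\star = 0^d$, and let the hard function be $f(x) = x_i$ for a coordinate $i$ drawn uniformly at random from $[d]$ (or, to keep $f$ monotone and get the distance right, $f$ is the monotone function that is $1$ exactly when $x$ dominates the single point $e_i$). Then $S(f) = 1$, $\Delta_f(x^\star) = 1$, and the unique optimal counterfactual is $e_i$. Any deterministic algorithm learns almost nothing about $i$ from a query $x$ with $x_i = 0$ (it only sees $f(x) = 0$), so by a Yao-style / information-theoretic argument it needs $\Omega(\log d)$ queries to pin down $i$ with constant probability — more precisely, after $t$ queries that all return $0$ the posterior on $i$ is uniform over a set of size $\ge d - t\cdot(\text{something small})$, and one shows $t = \Omega(\log d)$ is required. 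I'd phrase this cleanly: each query to $f$ reveals at most one bit, and specifying one of $d$ candidates requires $\log_2 d$ bits, so $\eps$-success needs $\Omega(\log d + \log \eps)$ queries. To make the distance parameter $\Delta$ arbitrary rather than $1$, plant the hidden structure on a block of $\Theta(d)$ coordinates and pad with $\Delta - 1$ "forced" coordinates whose flips are known to the algorithm, so that $\Delta_f(x^\star) = \Delta$ but the hard part is still locating the hidden coordinate.

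The main obstacle will be the $S^{\Omega(\Delta)}$ term, which needs a genuine combinatorial construction. Here the plan is: take $x^\star = 0^d$, partition (a large enough set of) coordinates into $\Delta$ groups each of size roughly $S$, and let the optimal counterfactuals be the minterms of a function built as an "address/tribes-like" gadget — something like $f(x) = \bigwedge_{j=1}^{\Delta} (\text{OR over a hidden size-}?\text{ subset of group } j)$, so that an optimal counterfactual must flip exactly one coordinate in each group, but which coordinate in group $j$ is hidden, and the choices are independent across groups. One checks $S(f) \le S$ (the width of the ORs / structure of the tribes gives sensitivity $\le S$), $\Delta_f(x^\star) = \Delta$ (need one flip per group), and that any monotone counterfactual is a full transversal of the $\Delta$ hidden sets. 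The condition $\Delta \le S^c$ is exactly what is needed so that these gadgets fit inside blocks of size polynomial in $S$ while keeping sensitivity $\le S$. For the query lower bound I'd argue: as long as the algorithm's queries have not "hit" the hidden coordinate in some group $j$, it has gained no information distinguishing the candidates in group $j$, and by a union bound / direct product over the $\Delta$ groups it must essentially perform a separate search in each of $\Delta$ groups of size $\approx S$, yielding $S^{\Omega(\Delta)}$ total; made rigorous via a distributional (Yao) argument showing the success probability after $q$ queries is at most $q / S^{\Omega(\Delta)}$ plus lower-order terms, so $q \ge \eps \cdot S^{\Omega(\Delta)}$. The delicate points I expect to fight with are (i) getting the function genuinely monotone while still forcing independent hidden choices per group, (ii) verifying the sensitivity bound $S(f) \le S$ against the chosen gadget, and (iii) ensuring queries reveal information about at most one hidden group at a time so the per-group searches really do multiply rather than add.

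Finally I would assemble the theorem: given $\mcA$, $S$, $d$, $\Delta$, run the adversary that places the $S^{\Omega(\Delta)}$ gadget on one block of coordinates and the $\log d$ needle on a disjoint block of $\Theta(d)$ coordinates, combining them (e.g.\ via an AND of a constant-shifted copy of each) so that the resulting $f$ is monotone with $S(f) \le S$, $\Delta_f(x^\star) = \Delta$, and finding one optimal counterfactual requires solving both sub-problems; by the two arguments above this forces $\max\{S^{\Omega(\Delta)}, \Omega(\log d)\} = \Omega\big(S^{\Omega(\Delta)} + \log d\big)$ queries, which is the claimed bound. Randomization is handled throughout by Yao's minimax principle: it suffices to exhibit, for each sub-problem, a distribution over monotone hard instances on which every deterministic algorithm with $q$ queries succeeds with probability $< \eps$.
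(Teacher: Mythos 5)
Your overall plan — Yao's minimax principle, a hidden-coordinate ``needle-in-a-haystack'' for the $\Omega(\log d)$ term, and a combinatorial planted construction for the $S^{\Omega(\Delta)}$ term — matches the paper in spirit, and your $\Omega(\log d)$ argument is essentially identical to the paper's Lemma \ref{lem:lb-log-d}. The final-assembly step also differs slightly but harmlessly: you propose splicing the two gadgets onto disjoint coordinate blocks, whereas the paper simply observes that $\max(a,b) = \Omega(a+b)$ and invokes whichever of the two lemmas gives the larger bound for the given parameters.

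The genuine gap is in your $S^{\Omega(\Delta)}$ construction. The ``AND over $\Delta$ groups of a hidden coordinate/OR in each group'' gadget does not force $S^{\Omega(\Delta)}$ queries; it forces only roughly $\Delta\log S$. Concretely, with $f(x) = \bigwedge_{j=1}^{\Delta} x_{i_j}$ and groups $G_1,\ldots,G_\Delta$ of size $S$: set all coordinates in $G_2,\ldots,G_\Delta$ to $1$, so $f(x) = x_{i_1}$, and binary-search inside $G_1$ to locate $i_1$ in $O(\log S)$ queries; then repeat group by group. Your flagged concern (iii) — ``ensuring queries reveal information about at most one hidden group at a time so the per-group searches really do multiply rather than add'' — is not merely a technical nuisance but points at why the construction is irreparable in this form: isolating groups makes the search \emph{additive}, and even if you could prevent isolation you would at best get additivity, never a product. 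To obtain a $S^{\Omega(\Delta)}$ bound you need an ``all-or-nothing'' structure in which a single query yields information only if it simultaneously hits \emph{all} $\Delta$ hidden choices at once. The paper achieves this in Lemma \ref{lem:lb-power} by taking $\boldf$ to equal the simple threshold function $g(x)=\Ind[\text{$\geq \Delta+1$ of the first $S$ coordinates are $1$}]$ everywhere except at one planted pattern $\bz$ of Hamming weight exactly $\Delta$, where it is forced to $1$; then any query whose first $S$ coordinates do not equal $\bz$ is answered identically to $g$, so the algorithm's success probability is at most $(q+1)/\binom{S}{\Delta}$, giving $q = S^{\Omega(\Delta)}$ once $\Delta \leq S^c$. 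That planted-point idea is the key ingredient missing from your proposal.
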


We'll allow the algorithm $\mathcal{A}$ to use randomness. In order to give lower bounds against randomized algorithms, we'll use the easy direction of Yao's Lemma.

\begin{lemma}[\cite{Yao77}]
    \label{lem:Yao}
    For any $q \in \N$, let $\mathcal{R}_q$ and $\mathcal{D}_q$ be the set of all $q$-query randomized and deterministic algorithms respectively, and let $I$ be the set all of possible pairs $f: \zo^n \to \zo$ and $x^\star \in \zo^n$ (i.e.~instances of the counterfactual problem). 
    
    For any distribution $\mu$ supported on $I$,
    \begin{align*}
        \min_{R \in \mathcal{R}_q} &\max_{(f,x^\star) \in I}[\error_R(f,x^\star)] \\
        \geq &\min_{D \in \mathcal{D}_q} \Ex_{(\boldf,\bx^\star) \sim \mu}[\error_D(\boldf,\bx^\star)]
    \end{align*}
    where $\error_R(f,x^\star)$ is the probability that $R$ does not return an optimal counterfactual for $x^\star$, and $\error_D(f,x^\star) = \Ind[\text{$D$ does not return an optimal counterfactual for $x^\star$}]$.
\end{lemma}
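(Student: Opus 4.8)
The plan is to use the standard fact that a $q$-query randomized algorithm is precisely a probability distribution over $q$-query deterministic algorithms, together with two elementary inequalities: a maximum over a set dominates any weighted average over that set, and any average of a family of numbers is at least the minimum of that family. In other words, the whole content of the ``easy direction'' is that averaging interpolates between min and max.

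First I would fix an arbitrary $R \in \mathcal{R}_q$ and the given distribution $\mu$ on $I$. Since a maximum dominates any $\mu$-average, $\max_{(f,x^\star) \in I} \error_R(f,x^\star) \ge \Ex_{(\boldf,\bx^\star)\sim\mu}[\error_R(\boldf,\bx^\star)]$. Next I would open up $R$: writing $R_r$ for the deterministic algorithm obtained from $R$ by hardwiring its internal random string to a fixed value $r$, the definition of the error of a randomized algorithm gives $\error_R(f,x^\star) = \Ex_{\bm r}[\error_{R_{\bm r}}(f,x^\star)]$ for every fixed pair $(f,x^\star)$. Substituting this and exchanging the order of the two expectations --- legitimate because the integrand $\error_{R_{\bm r}}(\cdot)$ takes values in $[0,1]$ --- yields $\Ex_{(\boldf,\bx^\star)\sim\mu}[\error_R(\boldf,\bx^\star)] = \Ex_{\bm r}\big[\Ex_{(\boldf,\bx^\star)\sim\mu}[\error_{R_{\bm r}}(\boldf,\bx^\star)]\big]$.

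Finally, for each fixed value $r$ the algorithm $R_r$ makes at most $q$ queries --- hardwiring the randomness cannot increase the query count --- so $R_r \in \mathcal{D}_q$, and hence the inner expectation $\Ex_{(\boldf,\bx^\star)\sim\mu}[\error_{R_{\bm r}}(\boldf,\bx^\star)]$ is at least $\min_{D\in\mathcal{D}_q}\Ex_{(\boldf,\bx^\star)\sim\mu}[\error_D(\boldf,\bx^\star)]$ pointwise in $r$; taking the expectation over $\bm r$ preserves this bound. Chaining the three steps shows $\max_{(f,x^\star)\in I}\error_R(f,x^\star) \ge \min_{D\in\mathcal{D}_q}\Ex_{(\boldf,\bx^\star)\sim\mu}[\error_D(\boldf,\bx^\star)]$ for every $R \in \mathcal{R}_q$, and taking the minimum over $R$ on the left-hand side yields the claimed inequality.

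There is no serious obstacle here, as this is the elementary direction of Yao's principle. The only points that merit a sentence of care are the formalization invoked at the end of the second paragraph --- that a $q$-query randomized algorithm is a distribution over $q$-query deterministic algorithms, each deterministic realization of which still makes at most $q$ queries --- and the interchange of the two expectations in that same display, which is immediate from boundedness of the error indicator (Tonelli/Fubini in general, or mere linearity of expectation in the finite case).
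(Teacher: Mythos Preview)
Your proof is correct and is the standard argument for the easy direction of Yao's minimax principle. Note, however, that the paper does not actually prove this lemma: it is stated with a citation to~\cite{Yao77} and used as a black box, so there is no proof in the paper to compare your approach against.
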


First, we show that $s^{\Omega(\Delta)}$ queries are necessary.
\begin{lemma}
    \label{lem:lb-power}
    For any $\Delta, d, S, q \in \N$ where $S \leq d$, and $\mathcal{A}$ a $q$-query randomized algorithm, there exists some monotone function $f : \zo^d \to \zo$ with $S(f) \leq S$ and input $x^\star \in \zo$ satisfying $\Delta_f(x^\star) = \Delta$ on which $\mathcal{A}$ successfully returns an optimal counterfactual for $x^\star$ with probability at most $\frac{q + 1}{\binom{S}{\Delta}}$.
\end{lemma}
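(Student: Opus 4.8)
The plan is to exhibit a hard family of monotone functions via a distributional argument, invoking Yao's Lemma (\Cref{lem:Yao}) to reduce to deterministic algorithms. The natural construction is the following: fix a ``secret'' set $\bS \subseteq [d]$ of size $S$ drawn uniformly, and then a further ``secret'' subset $\bT \subseteq \bS$ of size $\Delta$ drawn uniformly from within $\bS$. Take $x^\star = 0^d$, and let $f$ be the monotone function that outputs $1$ exactly on inputs $y$ such that $y_i = 1$ for all $i \in \bT$, and some additional structure on $\bS \setminus \bT$ to control the sensitivity --- for instance, $f(y) = 1$ iff $y \supseteq \bT$ restricted to $\bS$, i.e.\ $f$ depends only on the coordinates in $\bS$ and equals the AND of the coordinates in $\bT$ (an anti-monotone-looking object, but AND is monotone). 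Then the unique minimal counterfactual for $x^\star$ is the instance with exactly the coordinates of $\bT$ flipped to $1$, so $\Delta_f(x^\star) = \Delta$; and one checks $S(f) = |\bT| = \Delta \le S$ at the all-zeros input, but we need the function to have sensitivity exactly (or at most) $S$ --- this is the part that requires care, and I would pad with a small OR-of-the-other-$\bS$-coordinates term or similar so that $S(f)$ lands in $[\Delta, S]$ as needed. (Actually a cleaner choice: $f(y) = 1$ iff $y_i = 1$ for all $i \in \bT$ \emph{or} $|\{i \in \bS : y_i = 1\}| \ge$ some threshold, tuned so that the max sensitivity is $S$.)

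The core information-theoretic claim is then: any deterministic $q$-query algorithm, run on a function drawn from this distribution, learns almost nothing about $\bT$ unless it gets ``lucky.'' Concretely, as long as every query $y$ the algorithm has made so far satisfies $f(y) = f(x^\star) = 0$ (which is the generic case when $y$ does not happen to contain $\bT$), the posterior on $\bT$ is still close to uniform over the $\binom{S}{\Delta}$ candidate subsets of $\bS$ --- or at least, the algorithm has ruled out at most one candidate per query. Each query either returns $0$ (ruling out at most... hmm, actually a $0$-answer $f(y)=0$ rules out all $\bT \subseteq \{i : y_i = 1\}$, which could be many) --- so I need to be more careful and instead argue that a query with many $1$-coordinates is itself ``hard to find'' because the algorithm doesn't know $\bS$. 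The right framing: the algorithm must in effect guess a set of $\Delta$ coordinates that equals $\bT$; each query gives at most a yes/no on one candidate guess (since learning $\bS$ is itself hard), and a correct final answer requires having identified $\bT$ exactly. This yields success probability at most $\frac{q+1}{\binom{S}{\Delta}}$: the $q$ queries plus the final output give $q+1$ ``chances'' to hit the uniformly random target among $\binom{S}{\Delta}$ possibilities.

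Making the ``at most one candidate per query'' step rigorous is the main obstacle. The clean way to do it is a careful conditioning argument: condition on the transcript being all-$0$ answers, and show that under this conditioning $\bT$ is uniform over a set of size $\ge \binom{S}{\Delta} - (\text{number of queries so far})$; then the probability the next query returns $1$ (i.e.\ ``hits'') is at most $1/(\binom{S}{\Delta} - q)$, and union-bounding over the $q$ queries and the final guess gives the bound. Subtlety: a query $y$ with, say, all coordinates set to $1$ would reveal $f(y) = 1$ for \emph{every} target, so it gives no information --- good, that's consistent; and a query that reveals $f(y) = 1$ for a \emph{specific} candidate $\bT$ only if $y \supseteq \bT$, so the set of candidates consistent with ``$f(y)=1$'' is those $\bT \subseteq \{i : y_i=1\}$, and with ``$f(y)=0$'' is the complement. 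To keep ``one candidate per query,'' I should tweak the construction so that the relevant information is about which \emph{specific} $\Delta$-subset is the target, e.g.\ by making $f$ also encode $\bS$ inaccessibly; alternatively absorb the $\binom{|y|_1}{\Delta}$ loss but note $\Delta \le S^c$ with $c < 1$ keeps things polynomial --- I would pick whichever version makes the clean $\frac{q+1}{\binom{S}{\Delta}}$ bound go through, most likely by having the hard direction be ``which singleton-ish structure'' rather than worrying about large-weight queries. Finally, since this holds for the averaged deterministic algorithm, \Cref{lem:Yao} transfers it to randomized $\mcA$, completing the proof of \Cref{lem:lb-power}.
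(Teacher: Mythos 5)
Your construction, once you commit to the ``cleaner choice'' with the threshold term, is essentially the paper's: they fix the support to be the first $S$ coordinates (there is no need to also randomize a hidden set $\bS$ --- doing so only complicates the sensitivity accounting), set $\bx^\star = 0^d$, draw $\bz$ uniformly from the $\binom{S}{\Delta}$ weight-$\Delta$ strings in $\zo^S$, and define $\boldf(x)=1$ iff the first $S$ coordinates of $x$ have at least $\Delta+1$ ones or equal $\bz$ exactly. The threshold clause is precisely the masking device you were reaching for: it forces $\boldf$ to coincide with the \emph{fixed} threshold function $g(x)=\Ind[\text{first }S\text{ coords of }x\text{ have}\ge\Delta+1\text{ ones}]$ on every input except the single restricted input $\bz$.

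The step you yourself flag as ``the main obstacle'' is exactly where your sketch would fail. Conditioning on an \emph{all-zero} transcript is the wrong event: under the threshold construction, any query with $\ge\Delta+1$ ones in the first $S$ coordinates returns $1$ deterministically, so for many algorithms the all-zero event has probability zero and your posterior count $\binom{S}{\Delta}-q$ never applies. The right event is that every answer agrees with the fixed function $g$. Since $\boldf$ and $g$ differ only at $\bz$, any single deterministic query distinguishes them with probability at most $1/\binom{S}{\Delta}$; a union bound over the $q$ queries plus the one final deterministic output (one more ``guess'' at $\bz$) gives $(q+1)/\binom{S}{\Delta}$. This is a one-shot union bound against a fixed comparison function, not a conditional posterior-counting argument, and it sidesteps the degeneracy you noticed with high-weight queries. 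The rest of your plan --- Yao's Lemma, $S(\boldf)\le S$ because $\boldf$ depends only on the first $S$ coordinates, and $\bz$ padded with zeros being the unique optimal counterfactual so $\Delta_{\boldf}(\bx^\star)=\Delta$ --- matches the paper.
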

\begin{proof}
    We will apply Yao's Lemma: To apply it, we need to define a distribution over input instances $(\boldf, \bx^\star)$. That distribution is simple. With probability $1$, $\bx^\star = (0, \ldots, 0)$. Then, we sample a uniformly random $\bz$ from the $\binom{S}{\Delta}$ elements of $\zo^S$ with Hamming weight equal to $\Delta$ and set $\boldf:\zo^d \to \zo$ as
    \begin{equation*}
        \boldf(x) \coloneqq \begin{cases}
            1~|~\text{$\ge \Delta + 1$ of the first $s$ coordinates of $x$ are $1$}\\
            1~|~\text{the first $s$ coordinates of $x$ equal $\bz$}\\
            0~|~\text{otherwise}.
        \end{cases}
    \end{equation*}
    
    First, we note that it is always true that $S(\boldf) \leq S$. This is because $\boldf(x)$ only depends on the first $S$-coordinates of $x$, so flipping any of the other $d - S$ coordinates cannot change $\boldf$'s output. Second, we note that the input $\bx'$ being $\bz$ concatenated with $d - S$ many $0$s satisfies $\boldf(x^\star) \neq \boldf(\bx')$ and $|x^\star - \bx'| = \Delta$. Therefore, $\Delta_{\boldf}(x^\star) \leq \Delta$. Furthermore, $\bx'$ is the \emph{only} counterfactual for $x^\star$ at distance $\leq \Delta$, and so the algorithm succeeds if and only if it outputs $\bx'$.
    
    Consider an arbitrary $q$-query \emph{deterministic} algorithm $\mathcal{A}'$.  By Yao's Lemma, it is sufficient to prove that the probability $\mathcal{A}'$ outputs $\bx'$ given the random instance $(\boldf, \bx^\star)$ is at most $\frac{q + 1}{\binom{S}{\Delta}}$. As $\mathcal{A}'$ is deterministic, the queries it makes and answer it outputs are a deterministic function of the answers to its previous queries. Let $g: \zo^d \to \zo$ be defined as
    \begin{equation*}
        g(x) \coloneqq \begin{cases}
            1~|~\text{$\ge \Delta + 1$ of the first $S$ coordinates of $x$ are $1$}\\
            0~|~\text{otherwise}.
        \end{cases}
    \end{equation*}
    We will consider \emph{one} possible execution path of $\mathcal{A}'$, when $\boldf$ and $g$ give the same answer on each query $\mathcal{A}'$ makes. Specifically, the execution path of $\mathcal{A}'$ when every query it makes is answered consistently with $g$. In that case, $\mathcal{A}'$ will make (the deterministic) queries $x^{(1)}, \ldots, x^{(q)}$. We compute
    \begin{align*}
        \Pr&[\text{Any query inconsistent with $g$}] \leq \sum_{j = 1}^q \Pr[\boldf(x^{(j)}) \neq g(x^{(j)})] \tag{Union bound} \\
        &= \sum_{j = 1}^q \Pr[\text{The first $S$ coordinates of $x^{(j)}$ equal $\bz$}] \\
        &\leq \frac{q}{\binom{S}{\Delta}}.
    \end{align*}
    Lastly, if $\mathcal{A}'$ follows the single execution path corresponding to every query it makes being answered consistently with $g$, it will output some answer $x^{(g)}$. This answer is correct if and only if it equals $\bx'$, which occurs with probability at most $\frac{1}{\binom{S}{\Delta}}$ for any choice of $x^{(g)}$. Therefore,
    \begin{align*}
        \Pr[\mathcal{A}'&\text{ returns an optimal counterfactual}] \\
        &=  \Pr[\mathcal{A}'\text{ returns an optimal counterfactual}, \text{All queries consistent with $g$}] \\
        &\quad+\Pr[\mathcal{A}'\text{ returns an optimal counterfactual},\text{Any query inconsistent with $g$}] \\
        &\leq \Pr[x^{(g)} = \bz] +\Pr[\text{Any query inconsistent with $g$}] \\
        &\leq \frac{1}{\binom{S}{\Delta}} + \frac{q}{\binom{S}{\Delta}} = \frac{q+1}{\binom{S}{\Delta}}
    \end{align*}
    The desired result follows from Yao's Lemma.
    
\end{proof}

Next we show that $\log d$ queries are necessary.
\begin{lemma}
    \label{lem:lb-log-d}
    For any $d,q \in \N$ and $\mathcal{A}$ a $q$-query randomized algorithm, there exists a monotone model $f:\zo^d \to \zo$ satisfying $S(f) = 1$ and input $x^\star$ where $\Delta_f(x^\star) = 1$ on which $\mathcal{A}$ successful returns an optimal counterfactual for $x^\star$ with probability at most $\frac{2^q}{d}$.
\end{lemma}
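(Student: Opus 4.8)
The plan is to mimic the structure of the proof of \Cref{lem:lb-power}, again invoking Yao's Lemma (\Cref{lem:Yao}), but with a much simpler hard distribution tailored to the $S(f)=1$ regime. With probability $1$ we set $\bx^\star = (0,\ldots,0)$, and then we hide a single ``special'' coordinate $\bi$ drawn uniformly from $[d]$. The function is the dictator-type map $\boldf(x) = x_{\bi}$, i.e.~$\boldf(x) = 1$ iff the $\bi$-th coordinate of $x$ is $1$. This $\boldf$ is monotone, has $S(\boldf) = 1$ (flipping any single coordinate of any input changes the output only when that coordinate is $\bi$, so every input has sensitivity exactly $1$), and $\Delta_{\boldf}(x^\star) = 1$ with the unique optimal counterfactual being $e_{\bi}$, the indicator vector of coordinate $\bi$. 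Hence the algorithm succeeds iff it outputs $e_{\bi}$, which is equivalent to correctly identifying $\bi$.

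The core of the argument is then an information-theoretic / adversary bound: with $q$ queries, a deterministic algorithm (which is all we need by Yao) cannot learn more than $q$ bits, so it cannot pin down one of $d$ equally likely hypotheses with probability better than $2^q/d$. Concretely, I would run the standard ``answer everything with the all-zeros function $g \equiv 0$'' execution-path argument: fix the single execution path of a deterministic $q$-query algorithm $\mathcal{A}'$ on which every query $x^{(j)}$ is answered by $g(x^{(j)}) = 0$. The answer to a query $x^{(j)}$ under $\boldf$ disagrees with $g$ only if $x^{(j)}_{\bi} = 1$, i.e.~only if $\bi$ lies in the support of $x^{(j)}$. One subtlety relative to \Cref{lem:lb-power}: a single query can rule out \emph{many} values of $\bi$ at once (all coordinates in its support), so the crude union bound over queries does not immediately give the bound; instead I would argue by a pruning / decision-tree argument. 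View the computation as a binary tree of depth $q$ over the $2^q$ possible answer sequences; for each root-to-leaf branch, the set of hidden coordinates $\bi$ consistent with that branch is fixed, the algorithm outputs one fixed guess, and it is correct for at most one such $\bi$ (the one it guesses, if that $\bi$ is even consistent with the branch). Since the $d$ possible values of $\bi$ are partitioned among at most $2^q$ leaves and each leaf ``catches'' at most one value of $\bi$, at most $2^q$ of the $d$ values of $\bi$ lead to success, giving success probability at most $2^q/d$ over the uniform choice of $\bi$. Yao's Lemma then transfers this to randomized algorithms, completing the proof. Combining this with \Cref{lem:lb-power} (taking, say, the $\eps$-success requirement and choosing $q$ below the threshold in each lemma) yields the additive bound $q = S^{\Omega(\Delta)} + \Omega(\log d)$ claimed in \Cref{thm:lb-monotone}.

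The main obstacle is exactly that subtlety: unlike in \Cref{lem:lb-power} where each query reveals at most one bit of useful information about $\bz$ (because $\bz$ has Hamming weight $\Delta \ge 1$ and a random query rarely matches it), here a cleverly chosen query with large support can eliminate a constant fraction of candidate coordinates in one shot, so the naive ``$q$ queries, $q$ bits'' union-bound phrasing has to be replaced by the cleaner decision-tree counting argument sketched above (which still only gives $2^q$, not $q+1$, consistent outcomes — hence the $2^q/d$ rather than $(q+1)/d$ in the statement). Everything else — monotonicity, the sensitivity computation, identifying the unique optimal counterfactual, and the final appeal to Yao — is routine.
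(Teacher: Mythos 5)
Your proof is correct and uses essentially the same hard distribution and the same final counting argument as the paper: the paper's one-line statement that ``a $q$-query deterministic algorithm can only output $2^q$ different answers'' is precisely your decision-tree observation that the $d$ values of $\bi$ are partitioned among at most $2^q$ leaves, each catching at most one. Your extra remark about why the single-execution-path union bound from \Cref{lem:lb-power} does not carry over is accurate and clarifying, but the paper simply goes straight to the $2^q$-leaf count without raising the issue.
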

\begin{proof}
    Once again, we will pick a distribution over input instances $(\boldf, \bx^\star)$ and apply Yao's Lemma. With probability $1$, we set $\bx^\star = (0, \ldots, 0)$. Then, we sample a uniformly random $\bi \in [d]$ and set
    \begin{equation*}
        \boldf(x) = x_{\bi}
    \end{equation*}
    First, we note that $S(\boldf) = 1$, as the $f$ is only sensitive on the $\bi^{\text{th}}$ coordinate. Furthermore, the input that is $1$ on the $\bi^{\text{th}}$ coordinate and $0$ everywhere else is a counterfactual for $\bx^\star$ of distance $1$. This is the only optimal counterfactual.
    
    All that remains is to argue that any $q$-query deterministic algorithm fails to return an optimal counterfactual for $\bx^\star$. A $q$-query deterministic algorithm can only output $2^q$ different answers. However, based on the choice of $\bi$, there are $d$ unique choices for the optimal counterfactual. Therefore, the probability the algorithm can output the optimal counterfactual is at most $\frac{2^q}{d}$.
\end{proof}

\begin{proof}[Proof of \Cref{thm:lb-monotone}]
    It's sufficient to show that
    \begin{equation*}
        q = \max(S^{\Omega(\Delta)}, \Omega(\log d))
    \end{equation*}
    where we treat $\eps$ and $c$ as constants. By \Cref{lem:lb-power}, we have that
    \begin{align*}
        q &\geq \eps \cdot \binom{S}{\Delta} - 1 \\
        &\geq \eps\cdot \left(\frac{S}{\Delta}\right)^{\Delta} - 1\\
        &\geq \eps\cdot \left(S^{1-c}\right)^{\Delta} - 1 \tag{$\Delta \leq S^c$} \\
        &= S^{\Omega(\Delta)}.
    \end{align*}
    Then, by \Cref{lem:lb-log-d}, $\frac{2^q}{d} \geq \eps$, and so
    \begin{equation*}
        q \geq \log(\eps d) = \Omega(\log d).
    \end{equation*}
\end{proof}

\subsection{Lower bound against local algorithms}

Recall that an algorithm for finding a counterfactual for $x^\star$ is \emph{local} if its first query is $x^\star$ and every subsequent query differs from a previous query by exactly one feature.

\begin{theorem}
    \label{thm:lb-local-formal}
    For any local search algorithm $\mcA$, $d \in \N$, and $\eps > 0$, there is a monotone model $f: \zo^d \to \zo$ with $S(f) = 1$ and an instance $x^\star$ with $\Delta_f(x^\star)$ such that $\mcA$ must make $q \geq \eps d$ queries in order to find a single optimal counterfactual for $x^\star$.
\end{theorem}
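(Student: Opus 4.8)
The plan is to use Yao's Lemma once more, so it suffices to construct a distribution over monotone instances $(\boldf, \bx^\star)$ with $S(\boldf) = 1$ and $\Delta_{\boldf}(\bx^\star) = 1$ on which every deterministic \emph{local} algorithm making fewer than $\eps d$ queries fails with constant probability. The natural hard distribution is essentially the one used in \Cref{lem:lb-log-d}: set $\bx^\star = (0, \ldots, 0)$, sample a uniformly random coordinate $\bi \in [d]$, and let $\boldf(x) = x_{\bi}$. This $\boldf$ is monotone, has $S(\boldf) = 1$, and its unique optimal counterfactual is $e_{\bi}$, the input that is $1$ exactly on coordinate $\bi$; so $\Delta_{\boldf}(\bx^\star) = 1$.

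The key observation that makes the locality constraint bite is the following: a local algorithm's first query must be $\bx^\star = 0^d$, and on $0^d$ we have $\boldf(0^d) = 0$ regardless of $\bi$. Inductively, as long as the algorithm has only ever queried points $x$ with $x_{\bi} = 0$, it has received the answer $0$ on every query, so from its point of view $\boldf$ is indistinguishable from the all-zeros function and $\bi$ remains uniform over the coordinates it has not yet ``ruled out.'' Because each query differs from a \emph{previous} query in exactly one coordinate, and the run starts at $0^d$, every query the algorithm makes before it first hears a $1$ can be charged to the coordinates it has set to $1$ along the way; the crucial point is that a single query can contain at most a bounded number (in fact, for the cleanest argument, we restrict attention to the first time the algorithm sets a fresh coordinate to $1$). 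More simply: consider the sequence of queries $x^{(1)} = 0^d, x^{(2)}, \ldots, x^{(q)}$. Since $x^{(j)}$ differs from some earlier query in one coordinate, the set $\{\,i : x^{(j)}_i = 1\,\}$ grows by at most one new element per step as we walk the query tree, so among the first $q$ queries at most $q - 1 < \eps d$ distinct coordinates are ever set to $1$. If $\bi$ is not among these coordinates — which happens with probability at least $1 - \eps$ over the choice of $\bi$ — then every query is answered $0$, the algorithm's transcript is identical to its transcript on the all-zeros function, and its output is some fixed input independent of $\bi$; that output equals $e_{\bi}$ with probability at most $\eps$ (in fact it can equal $e_{\bi}$ for at most one value of $\bi$). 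Hence the deterministic algorithm succeeds with probability at most $\eps + \eps = O(\eps)$, and rescaling $\eps$ by a constant gives the claim; applying Yao's Lemma (\Cref{lem:Yao}) transfers this to randomized local algorithms.

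I would organize the write-up as: (i) define the distribution and check $\boldf$ is monotone with $S(\boldf)=1$ and $\Delta_{\boldf}(\bx^\star)=1$; (ii) fix a deterministic $q$-query local algorithm with $q < \eps d$ and argue that running it against the all-zeros function determines a fixed query sequence touching $< \eps d$ coordinates with a $1$; (iii) observe that whenever $\bi$ avoids those coordinates the real run coincides with that run, so the output is coordinate-independent and wrong except for at most one value of $\bi$; (iv) conclude via Yao. The main subtlety — and the one place the ``local'' hypothesis must be invoked carefully rather than superficially — is step (ii): without locality the algorithm could query, say, the point $e_i$ for every $i$ using $d$ scattered queries and learn $\bi$ immediately, so the argument genuinely relies on the fact that each query is a one-bit perturbation of a previously-made query and the run is rooted at $0^d$, which forces the ``frontier'' of coordinates set to $1$ to expand by at most one per query. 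Everything else is a routine union bound and counting argument.

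\begin{proof}[Proof of \Cref{thm:lb-local-formal}]
By Yao's Lemma (\Cref{lem:Yao}), it suffices to exhibit a distribution $\mu$ over instances $(\boldf, \bx^\star)$ on which every deterministic $q$-query local algorithm with $q < \tfrac{1}{2}\eps d$ fails to return an optimal counterfactual with probability at least $\tfrac12$; rescaling $\eps$ then yields the stated bound.

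\textbf{The hard distribution.} With probability $1$ set $\bx^\star = (0,\ldots,0)$. Sample $\bi \in [d]$ uniformly at random and let $\boldf(x) = x_{\bi}$. Then $\boldf$ is monotone, $S(\boldf) = 1$ (it is sensitive only on coordinate $\bi$), and the only input $x'$ with $\boldf(x') \neq \boldf(\bx^\star)$ and $|\Delta(\bx^\star, x')| = 1$ is $e_{\bi}$, the indicator vector of $\bi$; thus $\Delta_{\boldf}(\bx^\star) = 1$ and the algorithm succeeds iff it outputs $e_{\bi}$.

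\textbf{Analysis of a deterministic local algorithm.} Fix a deterministic local algorithm $\mcA'$ making $q < \tfrac12 \eps d$ queries, and consider its execution when every query is answered $0$ (equivalently, when run against the identically-zero function). Since $\mcA'$ is local and deterministic, this execution is a fixed sequence of queries $x^{(1)} = 0^d, x^{(2)}, \ldots, x^{(q)}$, where each $x^{(j)}$ ($j \geq 2$) differs from some $x^{(j')}$ with $j' < j$ in exactly one coordinate. Let $\mathrm{Ones}(x) = \{\, i : x_i = 1 \,\}$. Then $\mathrm{Ones}(x^{(1)}) = \varnothing$, and for $j \geq 2$ we have $|\mathrm{Ones}(x^{(j)})| \leq |\mathrm{Ones}(x^{(j')})| + 1 \leq \max_{j' < j} |\mathrm{Ones}(x^{(j')})| + 1$, so by induction $\big|\bigcup_{j=1}^{q} \mathrm{Ones}(x^{(j)})\big| \leq q - 1$. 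Let $B = \bigcup_{j=1}^q \mathrm{Ones}(x^{(j)})$, so $|B| \leq q - 1 < \tfrac12 \eps d$.

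Condition on the event $\bi \notin B$, which has probability $1 - |B|/d > 1 - \tfrac12\eps$. On this event, $\boldf(x^{(j)}) = x^{(j)}_{\bi} = 0$ for every $j$, so the queries answered to $\mcA'$ agree exactly with the all-zeros run; hence $\mcA'$ follows the fixed execution path above and outputs a fixed input $x^{\mathrm{out}}$ that does not depend on $\bi$. This output equals $e_{\bi}$ for at most one value of $\bi \in [d]$, so
\begin{equation*}
\Pr[\mcA' \text{ succeeds} \mid \bi \notin B] \leq \frac{1}{d}.
\end{equation*}
Therefore
\begin{equation*}
\Pr[\mcA' \text{ succeeds}] \leq \Pr[\bi \in B] + \Pr[\mcA' \text{ succeeds} \mid \bi \notin B] \leq \frac{|B|}{d} + \frac{1}{d} \leq \frac{\eps}{2} + \frac{1}{d} \leq \frac12
\end{equation*}
for $d$ sufficiently large. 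By Yao's Lemma, every randomized local algorithm making fewer than $\tfrac12 \eps d$ queries fails with probability at least $\tfrac12$ on some such $(f, x^\star)$, which gives the theorem after adjusting constants.
\end{proof}
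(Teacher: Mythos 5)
Your proof uses the same hard distribution ($\bx^\star = 0^d$, $\boldf(x) = x_{\bi}$ for a uniform $\bi$) and the same key observation that locality forces the set of coordinates ever set to $1$ to grow by at most one per query, which is exactly the paper's argument. One small slip worth noting: conditioned on $\bi \notin B$, the coordinate $\bi$ is uniform over $d - |B|$ values, so $\Pr[x^{\mathrm{out}} = e_{\bi} \mid \bi \notin B] \le 1/(d - |B|)$ rather than $1/d$; the paper sidesteps this conditioning entirely by bounding $\Pr[x^{\mathrm{out}} \text{ is optimal}] \le 1/d$ and $\Pr[\text{some query answered } 1] \le (q-1)/d$ unconditionally, giving the clean bound $q/d$ on success probability and hence $q \geq \eps d$ directly, with no ``$d$ sufficiently large'' caveat or constant-factor rescaling.
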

\begin{proof}
    As in the proof of \Cref{lem:lb-log-d}, we apply Yao's Lemma with the distribution with $\bx^{\star} = (0, \ldots, 0)$ with probability $1$ and
    \begin{equation*}
        \boldf(x) = x_{\bi}
    \end{equation*}
    where $\bi \in [d]$ is chosen uniformly at random. Once again, $S(\boldf) = 1$ as $f$ is only sensitive on the $i^{\mathrm{th}}$ and there is a counterfactual from $\bx^\star$ of distance $1$.
    
    Since we are applying Yao's Lemma, it is sufficient to reason about deterministic local algorithms. $\mcA'$ be a $q$-query local and deterministic algorithm. We will consider one execution path of $\mcA'$, namely that in which all its queries return $0$. In that case, it will make the deterministic queries $x^{(1)}, \ldots, x^{(q)}$. As $\mcA'$ is local, the queries $x^{(1)}, \ldots, x^{(q)}$ must all be adjacent, and since $x^{(1)} = (0,\ldots,0)$, the number of coordinates $j \in [d]$ on which $x^{(t)}_j = 1$ for some $t \in [q]$ is at most $q - 1$. The probability that $\boldf(x^{(t)}) = 1$ for some $t \in [q]$ is the probability that $\boldf(x^{(t)})_{\bi}= 1$ for some $t \in [q]$ which is at most $\frac{q-1}{d}$.
    
    Therefore $\mcA'$ follows a single execution path with probability at least $1 - \frac{q-1}{d}$. At the end of that execution path, it outputs a single ``guess" counterfactual $y$. However, $y$ is an optimal counterfactual for $\bx^\star$ only if $y_{\bi} = 1$ and $y_{j} = 0$ for all $j \neq \bi$, which can occur with probability at most $\frac{1}{d}$ for any single $y$. Therefore, the probability $\mcA'$ outputs a correct counterfactual is at most
    \begin{equation*}
        \Pr[y\text{ is an optimal counterfactual for }\bx^\star] + \Pr[\mcA'\text{ does not output }y] \leq \frac{1}{d} + \frac{q-1}{d} = \frac{q}{d}.
    \end{equation*}
    
    By applying Yao's Lemma, we conclude that any randomized $q$-query local algorithm succeeds in finding an optimal counterfactual with probability at most $\frac{q}{d}$. Therefore, to succeed with probability $\eps$, we must have $q \geq \eps d.$
\end{proof}




\section{Functions over general feature spaces} 

 \label{sec:general-feature}
 
 For a monotone model $f:X^d\to\zo$ and an instance $x^\star\in X^d$, we reduce the problem of computing $\mathcal{C}_{f}(x^\star)$ to the problem of computing $\mathcal{C}_{f_{x^\star}}(y)$ where $f_{x^\star}:\zo^d\to\zo$ is a Boolean model defined in terms of $f$ and $x^\star$ and $y\in\zo^d$ is a specific Boolean input to $f_{x^\star}$.

\paragraph{The reduction.}{
    We assume $f:X^d\to\zo$ is monotone with respect to some total ordering on $X$. Write $\overline{X}$ for the top element of $X$ and $\underline{X}$ for the bottom element of $X$. Assuming $f$ is nonconstant, we then have $f(\overline{X}^d)=1$ and $f(\underline{X}^d)=0$. For $x\in X^d$ and $y\in \zo^d$, we define $x\uparrow y, x\downarrow y\in X^d$ as instances satisfying 
    $$
    (x\uparrow y)_i=
    \begin{cases}
        \overline{X} & y_i=1\\
        x_i & y_i=0
    \end{cases}\qquad
    (x\downarrow y)_i=
    \begin{cases}
        x_i & y_i=1\\
        \underline{X} & y_i=0
    \end{cases}
    $$
    for all $i\in [d]$. That is, $x\uparrow y$ is the instance formed by ``snapping'' a subset of features, specified by $y$, to their largest possible value $\overline{X}$. In $x\downarrow y$, features are snapped in the opposite direction. Note that relative to the ordering $\le$ on $X$, we have $x\downarrow y\le x\le x\uparrow y$ for all $y\in \zo^d$. Given a monotone $f:X^d\to\zo$, we write $x\updownarrows_f y$ to denote snapping $x$ in a direction specified by $f(x)$:
    $$
    x\updownarrows_f y=
    \begin{cases}
        x\downarrow y & f(x)=1\\
        x\uparrow y & f(x)=0.
    \end{cases}
    $$
    We will just write $x\updownarrows y$ when $f$ is known from context. 
    We then define the Boolean function $f_x:\zo^d\to\zo$ as
    $$
    f_x(y)=f(x\updownarrows y).
    $$
    Note that $f_x$ is monotone since $y\le y'$ implies $x\updownarrows y\le x\updownarrows y'$. 
}

\paragraph{Properties of $f_x$.}{ 
    We establish two important properties of the reduction $f_x$. The first property states that the set
    $$
    \mathcal{C}_f(x)=\{\Delta(x',x):x'\in X^d\text{ is an optimal counterfactual for }x\text{ with respect to }f\}
    $$
    is equal to the set $\mathcal{C}_{f_x}(f(x)^d)$ where $f(x)^d\in \{0,1\}^d$ is the bit $f(x)$ concatenated with itself $d$ times. The second property states that the sensitivity of $f_x$ is at most the sensitivity of $f$.
    \begin{lemma}
        \label{lem:fx_properties}
        Let $f:X^d\to \zo$ be monotone. Then
        \begin{enumerate}
            \item $\mathcal{C}_f(x)=\mathcal{C}_{f_x}(f(x)^d)$
            \item $S(f_x)\le S(f)$
        \end{enumerate}
        for all $x\in X^d$.
    \end{lemma}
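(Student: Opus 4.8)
The plan is to treat the two parts separately; part (2) is immediate, and part (1) is the substance. I may assume $f$ is nonconstant, since otherwise there are no counterfactuals and both sides of (1) are empty, while $S(f_x) = 0 = S(f)$.

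For part (2), fix $y \in \zo^d$ and a coordinate $i$ on which $f_x$ is sensitive at $y$, i.e.\ $f_x(y) \ne f_x(y_{i \gets \overline{y}_i})$. Since the snapping operation $\updownarrows_f$ acts coordinate by coordinate, the instances $z \coloneqq x \updownarrows y$ and $x \updownarrows (y_{i \gets \overline{y}_i})$ agree on every coordinate other than $i$. Writing $a \in X$ for the $i$-th value of the latter instance, we get $f(z) = f_x(y) \ne f_x(y_{i \gets \overline{y}_i}) = f(z_{i \gets a})$, so $f$ is sensitive at $z$ on coordinate $i$. Hence every coordinate witnessing sensitivity of $f_x$ at $y$ also witnesses sensitivity of $f$ at $z$, which gives $S_{f_x}(y) \le S_f(z) \le S(f)$; maximizing over $y$ yields $S(f_x) \le S(f)$.

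For part (1), assume without loss of generality that $f(x) = 0$; the case $f(x) = 1$ is entirely symmetric, with $\downarrow$ in place of $\uparrow$ and the order on $X$ reversed. Then $f(x)^d = 0^d$, $x \updownarrows y = x \uparrow y$, and $f_x(y) = f(x \uparrow y)$. For $V \sse [d]$ let $y^V \in \zo^d$ denote its indicator vector (so $y^V_i = 1$ iff $i \in V$). The crux is the following two facts, both consequences of monotonicity: (i) if $x'$ is \emph{any} counterfactual for $x$ with $V = \Delta(x, x')$, then $x \uparrow y^V \ge x'$ (they agree off $V$, and $x \uparrow y^V$ is maximal on $V$), so $f(x \uparrow y^V) \ge f(x') = 1$, giving $f_x(y^V) = 1$ and $\Delta(x, x \uparrow y^V) \sse V$; and (ii) conversely, if $f_x(y) = 1$ then $x \uparrow y$ is a counterfactual for $x$ with $\Delta(x, x \uparrow y) \sse \mathrm{supp}(y)$.

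Using (i) on an optimal counterfactual together with (ii) on every $y$ with $f_x(y) = 1$, I would first deduce $\Delta_f(x) = \Delta_{f_x}(0^d)$. The set equality then follows in both directions: given an optimal counterfactual $x'$ with $\Delta(x, x') = V$, fact (i) produces $y^V$ with $f_x(y^V) = 1$ and $\Delta(0^d, y^V) = V$, and since $|V| = \Delta_f(x) = \Delta_{f_x}(0^d)$ this $y^V$ is optimal, so $V \in \mathcal{C}_{f_x}(0^d)$; conversely, given an optimal $y$ for $0^d$ with $\mathrm{supp}(y) = V$, fact (ii) produces a counterfactual $x \uparrow y$ with $\Delta(x, x \uparrow y) \sse V$, and since $|V| = \Delta_{f_x}(0^d) = \Delta_f(x)$ this inclusion must be an equality and $x \uparrow y$ optimal, so $V \in \mathcal{C}_f(x)$. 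The main (mild) obstacle is exactly this last point: the snapped instance could a priori differ from $x$ on a strict subset of $V$ — for instance if $x_i = \overline{X}$ already for some $i \in V$ — and ruling that out requires invoking the minimality of $\Delta_f(x) = \Delta_{f_x}(0^d)$. The remaining verifications (that $\updownarrows_f$ is well-defined and monotone, and the bookkeeping with the definition of an optimal counterfactual) are routine and partly already recorded in the paper.
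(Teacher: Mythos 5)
Your proof is correct and follows the same snapping-based correspondence as the paper (modulo the harmless choice of normalizing to $f(x)=0$ rather than $f(x)=1$). The one genuine difference is a point of care: the paper asserts ``by definition'' that $\Delta(x\downarrow y, x)=\Delta(y,1^d)$ for \emph{every} counterfactual $y$ of $1^d$, but this can fail when $x_i$ already sits at the extreme value on some snapped coordinate $i$ — in which case $x\downarrow y$ agrees with $x$ there and the left-hand set is a strict subset. You correctly flag this and close it by first establishing $\Delta_f(x)=\Delta_{f_x}(0^d)$ via the two containments (i) and (ii), and then using that numerical equality to force the containment $\Delta(x,x\uparrow y)\subseteq \mathrm{supp}(y)$ to be an equality for optimal $y$. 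So your argument is in effect a slightly tightened version of the paper's; the paper's proof can be read as implicitly making this same minimality appeal, but you make it explicit. Your part (2) is also the same idea, though you prove only the inequality $S_{f_x}(y)\le S_f(x\updownarrows y)$ rather than the paper's claimed equality, which is cleaner since the inequality is all that is needed (and the equality is not obviously true in general).
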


    \begin{proof}
      For (1), fix an input $x\in X^d$. Suppose without loss of generality that $f(x)=1$ (the arguments for $f(x)=0$ are symmetric) so that $f_x(1^d)=f(x\downarrow 1^d)=f(x)=1$. We will show that if $y\in \zo^d$ is a counterfactual for $1^d$ then there exists a counterfactual $x'\in X^d$ for $x$ satisfying $\Delta(x,x')=\Delta(y,1^d)$ and likewise if there is a counterfactual $x'$ for $x$ then there is a counterfactual $y$ for $1^d$ satisfying the same $\Delta(x,x')=\Delta(y,1^d)$. It then follows that $\Delta_f(x)=\Delta_{f_{x}(1^d)}$ and also $\mathcal{C}_f(x)=\mathcal{C}_{f_x}(1^d)$.
Let $y\in \zo^d$ be a counterfactual for $1^d$. Then, $$f_{x}(y)=f(x\downarrow y)=0$$ which shows that $x\downarrow y\in X^d$ is a counterfactual for $x$. Moreover, by definition,
$$
\Delta(x\downarrow y, x)=\Delta(y,1^d).
$$
On the other hand, suppose $x'\in X^d$ is a counterfactual for $x$. Let $y\in \zo^d$ be the instance defined by
$$
y_i=
\begin{cases}
    0 & i\in \Delta(x,x')\\
    1 & i\not\in \Delta(x,x')
\end{cases}
$$
for all $i\in[d]$. Then $x\downarrow y\le x'$ which implies $$f_x(y)=f(x\downarrow y)\le f(x')=0$$ by monotonicity. Hence, $y$ is a counterfactual for $1^d$ and again we have $\Delta(x\downarrow y, x)=\Delta(y,1^d)$. 

For (2), let $y\in\zo^d$ be an instance satisfying $S_{f_x}(y)=S(f_x)$. Then $f_x(y)=f(x\updownarrows y)$. In particular,
    $$
    S(f_x)=S_{f_x}(y)=S_f(x\updownarrows y)\le S(f)
    $$
    as desired. 
    \end{proof}
}

\subsection{Proof of \texorpdfstring{\Cref{thm:main-intro-non-boolean}}{Theorem 2}}
Let $f:X^d\to\zo$ be monotone and $x^\star\in X^d$. Assume without loss of generality $f(x^\star)=1$. Using \Cref{thm:main}, we compute and return the set $\mathcal{C}_{f_{x^\star}}(1^d)$ which queries $f$ at most $S(f_{x^\star})^{O(\Delta_{f_{x^\star}}(1^d))}$ times. \Cref{lem:fx_properties} states that $\mathcal{C}_{f_{x^\star}}(1^d)=\mathcal{C}_{f}(x^\star)$ which ensures we have returned the desired set. Moreover, since these two sets are equal we have that $\Delta_{f_{x^\star}}(1^d)=\Delta_{f}(x^\star)$. We can then write the query bound as 
\begin{align*}
    S(f_{x^\star})^{O(\Delta_{f_{x^\star}}(1^d))}\log d&= S(f_{x^\star})^{O(\Delta_f(x^\star))}\log d\\
    &\le S(f)^{O(\Delta_f(x^\star))}\log d\tag{\Cref{lem:fx_properties}}
\end{align*}
which completes the proof.\hfill{$\square$}

\section{Conclusion} 

We have given a query-optimal algorithm for finding counterfactuals for monotone models. Our algorithm achieves a logarithmic dependence on the dimension $d$, which is exponentially smaller than the previous best known query complexity of $d^{O(\Delta_f(x^\star))}$, achievable by brute-force local search.  

There are several concrete avenues for future work.  The key enabling ingredient in our overall algorithm is our implicit decision tree algorithm for monotone models.  This gives us a way to efficiently navigate an interpretable representation of a black box model $f$ --- a decision tree representation --- even in the case of complex models for which this representation is intractably large.  This is a potentially versatile technique for understanding black box models --- can we use it to efficiently compute other types of explanations?   The query complexity of our algorithm scales with the sensitivity of $f$, a discrete analogue of the Lipschitz constant.  Can our lower bounds be evaded by running our algorithm, or variants of it, for {\sl smoothed} versions $\tilde{f}$ of $f$, obtained from $f$ by adding a small amount of noise?   Finally, while we have focused on sparsity as our notion of distance in this work, it would be interesting to extend our techniques to accommodate other distance functions.

\section*{Acknowledgments}

We thank the ICML reviewers for their useful comments and feedback. 

Guy, Caleb, and Li-Yang are supported by NSF CAREER Award 1942123. Caleb is also supported by an NDSEG fellowship.  Jane is supported by NSF Award CCF-2006664.

\bibliography{bibtex}{}
\bibliographystyle{alpha}

\appendix
\section{Certificate-based restrictions}
\label{appendix:restrict_f_to_constant}

In this section, we prove \Cref{lem:restricting_to_constant}. We say a certificate $S\sse [n]$ of $f$ on $x\in\zo^d$ is a $0$-certificate if $f(x)=0$ and likewise $S$ is a $1$-certificate if $f(x)=1$. We write $C_0(f)=\max_{x\in f^{-1}(0)}\{C(f,x)\}$ to denote the $0$-certificate complexity of $f$ and $C_1(f)=\max_{x\in f^{-1}(1)}\{C(f,x)\}$ for the $1$-certificate complexity of $f$. Our proof will reference the fact that the intersection of a $1$-certificate with a $0$-certificate is necessarily nonempty (since otherwise there would be some instance $x\in\zo^d$ having both a $0$-certificate and a $1$-certificate). 

\begin{fact}
\label{fact:cert_intersections}
Let $S_0$ be a $0$-certificate of $f:\zo^d\to\zo$ and let $S_1$ be a $1$-certificate of $f$. Then $S_0\cap S_1\neq\varnothing$. 
\end{fact}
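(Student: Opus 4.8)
The plan is to argue by contradiction: assume $S_0 \cap S_1 = \varnothing$ and construct a single input that the two certificates simultaneously force to take the value $0$ and the value $1$.

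First I would unpack the definitions. Since $S_0$ is a $0$-certificate, there is an input $x \in \zo^d$ with $f(x) = 0$ such that $f(y) = 0$ for every $y$ with $y_{S_0} = x_{S_0}$; likewise, since $S_1$ is a $1$-certificate, there is an input $z \in \zo^d$ with $f(z) = 1$ such that $f(y) = 1$ for every $y$ with $y_{S_1} = z_{S_1}$. Next I would build the conflicting input $w \in \zo^d$ by setting $w_i = x_i$ for $i \in S_0$, $w_i = z_i$ for $i \in S_1$, and $w_i = 0$ (say, arbitrarily) for $i \notin S_0 \cup S_1$. The key point is that this is well defined \emph{precisely because} $S_0 \cap S_1 = \varnothing$, so the prescriptions from $x$ and from $z$ never clash on a shared coordinate. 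By construction $w$ agrees with $x$ on $S_0$, hence $f(w) = 0$ by the first certificate property, and $w$ agrees with $z$ on $S_1$, hence $f(w) = 1$ by the second — a contradiction. Therefore $S_0 \cap S_1 \neq \varnothing$.

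There is essentially no obstacle here: the entire content of the argument is the observation that disjointness of $S_0$ and $S_1$ is exactly what lets the common extension $w$ be formed. I would also remark that monotonicity of $f$ plays no role in this fact — it holds for an arbitrary Boolean function $f : \zo^d \to \zo$ — so no structural input from the earlier sections is needed.
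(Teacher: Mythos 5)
Your proof is correct and matches the paper's own (sketched) argument: the paper justifies this fact with the one-line remark that otherwise some input would have both a $0$-certificate and a $1$-certificate, which is exactly the contradiction you make explicit by constructing the common extension $w$. Your observation that monotonicity is irrelevant here is also accurate.
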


\begin{proof}[Proof of \Cref{lem:restricting_to_constant}]
  Let $\rho$ be a restriction of $f:\zo^d\to\zo$ such that $\text{Dom}(\rho)=\textsc{Cert}(f)$. We'll show that $C_0(f_\rho)+C_1(f_\rho)\le C_0(f)+C_1(f)-1$. The result then follows by induction. Suppose without loss of generality that $\text{Dom}(\rho)$ is a $1$-certificate of $f$. Then we claim $C_0(f_\rho)\le C_0(f)-1$. Consider any $x\in f_\rho^{-1}(0)$. Consider the string $x'\in\zo^d$ formed by inserting $\rho$ into the string $x$ so that $f(x')=f_\rho(x)$. Let $S_0$ be a $0$-certificate of $f$ on $x'$ with $|S_0|\le C_0(f)$. Then $S_0\setminus\text{Dom}(\rho)$ is a $0$-certificate of $f_\rho$ on $x$. Moreover, $S_0\cap \text{Dom}(\rho)\neq\varnothing$ by \Cref{fact:cert_intersections} and so $|S_0\setminus \text{Dom}(\rho)|\le |S_0|-1\le C_0(f)-1$. Since $x$ is any arbitrary $0$-input to $f_\rho$, we have that $C_0(f_\rho)\le C_0(f)-1$ as desired. 

  It follows each such restriction $\rho$ decreases $C_0(f)+C_1(f)$ by at least $1$. Thus, after at most $2C(f)\ge C_0(f)+C_1(f)$ restrictions $f$ becomes the constant function. 
\end{proof}

\end{document}